\title{Tight Approximation for~Unconstrained~XOS~Maximization}
\author{
  Yuval Filmus\thanks{Technion - Israel Institute of Technology, Haifa, Israel. Email: {\tt yuvalfi@cs.technion.ac.il}} \and
  Yasushi Kawase\thanks{Tokyo Institute of Technology, Tokyo 152-8550, Japan. Email: {\tt kawase.y.ab@m.titech.ac.jp}} \and
  Yusuke Kobayashi\thanks{Kyoto University, Kyoto 606-8502, Japan. Email: {\tt yusuke@kurims.kyoto-u.ac.jp}} \and
  Yutaro Yamaguchi\thanks{Kyushu University, Fukuoka 819-0395, Japan. Email: \texttt{yutaro\_yamaguchi@inf.kyushu-u.ac.jp}}}
\date{\empty}
\newcommand{\argmax}{\mathop{\rm arg\,max}}
\newcommand{\ot}{\leftarrow}
\newcommand{\hV}{{\hat{V}}}
\newcommand{\hn}{{\hat{n}}}
\newcommand{\hf}{{\hat{f}}}
\newcommand{\tn}{{\tilde{n}}}
\newcommand{\Xalg}{{\tilde{X}}}
\newcommand{\Xopt}{{X^*}}
\newcommand{\Add}{\texttt{Add}\xspace}
\newcommand{\GS}{\texttt{GS}\xspace}
\newcommand{\Submod}{\texttt{SubM}\xspace}
\newcommand{\FracSubadd}{\texttt{FSubA}\xspace}
\newcommand{\Subadd}{\texttt{SubA}\xspace}
\newcommand{\XOS}{\texttt{XOS}\xspace}
\newtheorem{theorem}{Theorem}
\newtheorem{lemma}[theorem]{Lemma}
\newtheorem{claim}[theorem]{Claim}
\newtheorem{corollary}[theorem]{Corollary}
\theoremstyle{definition}
\newtheorem{definition}[theorem]{Definition}
\theoremstyle{remark}
\newtheorem*{remark}{Remark}
\begin{document}
\maketitle
\thispagestyle{empty}

\begin{abstract}
A set function is called XOS if it can be represented by the maximum of additive functions.
When such a representation is fixed, the number of additive functions required to define the XOS function is called the width.

In this paper, we study the problem of maximizing XOS functions in the value oracle model.
The problem is trivial for the XOS functions of width $1$ because they are just additive, but it is already nontrivial even when the width is restricted to $2$.
We show two types of tight bounds on the polynomial-time approximability for this problem.
First, in general, the approximation bound is between $O(n)$ and $\Omega(n / \log n)$, and exactly $\Theta(n / \log n)$ if randomization is allowed, where $n$ is the ground set size.
Second, when the width of the input XOS functions is bounded by a constant $k \geq 2$, the approximation bound is between $k - 1$ and $k - 1 - \epsilon$ for any $\epsilon > 0$.
In particular, we give a linear-time algorithm to find an exact maximizer of a given XOS function of width $2$, while we show that any exact algorithm requires an exponential number of value oracle calls even when the width is restricted to $3$.
\end{abstract}

\paragraph{Keywords}
XOS functions, Value oracles, Approximation algorithms.

\clearpage
\thispagestyle{empty}
\tableofcontents
\clearpage
\setcounter{page}{1}

\section{Introduction}
Maximizing a set function is a fundamental task in combinatorial optimization as well as algorithmic game theory.
For example, when an agent has a valuation $v \colon 2^V \to \mathbb{R}$ on an item set $V$, a \emph{demand} of the agent under prices $p\in\mathbb{R}^V$ is a bundle that maximizes her utility, and computing a demand amounts to maximizing a set function $f(X)\coloneqq v(X)-\sum_{x\in X}p_x$.
We remark that, even if the valuation is \emph{monotone} (i.e., $X\subseteq Y \Longrightarrow v(X)\le v(Y)$), the utility $f$ may not be monotone.

One of the most extensively studied classes of set functions in the context of maximization
is the class of \emph{submodular functions}, which naturally captures the so-called \emph{diminishing returns property}.
Buchbinder et al.~\cite{BFNS2015} gave a very simple, randomized 2-approximation algorithm for maximizing nonnegative submodular functions in the \emph{value oracle model}, in which we can access a set function only by querying the oracle for the function value of each set.
This is tight in the sense that an exponential number of value oracle calls are required to achieve $(2-\epsilon)$-approximation (in expectation) for any positive constant $\epsilon$~\cite{FMV2011}, and moreover the 2-approximation algorithm was derandomized later in \cite{BF2018}.
Meanwhile, submodular functions can be exactly minimized in polynomial time in the value oracle model~\cite{GLS1981,GLS1988,IFF2001,Schrijver2000}.

In this paper, we study the maximization problem for another basic class of set functions called \emph{XOS functions}\footnotemark{}, which generalize submodular functions (see also Appendix~\ref{app:class}).
\footnotetext{XOS stands for XOR-of-OR-of-Singletons, where XOR means max and OR means sum. While XOS functions are assumed to be monotone in most existing literature, we allow non-monotone XOS functions. We remark that our problem (unconstrained non-monotone XOS maximization) and monotone XOS maximization under some constraint seem quite different, and it is difficult to compare the results across the two problems.}
A set function $f\colon 2^V\to\mathbb{R}$ is called XOS if it can be represented by the maximum of \emph{additive} functions, i.e., there are set functions $f_i \colon 2^V\to\mathbb{R}$ $(i \in [k] \coloneqq \{1, 2, \dots, k\})$ with $f_i(X) = \sum_{v \in X}f_i(v)$ for each $X \subseteq V$ such that
\begin{align*}
    f(X)=\max_{i\in[k]} f_i(X) = \max_{i\in[k]}\sum_{v\in X}f_i(v) \quad(\forall X\subseteq V),
\end{align*}
where $f_i(v)$ means $f_i(\{v\})$ (we often denote a singleton $\{x\}$ by its element $x$).
We remark that each $f_i$ as well as $f$ can take negative values.
When such a representation is fixed, $k$ is called the \emph{width} of $f$. 
An XOS function admitting a representation of width $k$ is called \emph{$k$-XOS}.
A $1$-XOS function is just an additive function.
The width of an XOS function could be exponential in $|V|$, and we may assume that it is at most $2^{|V|}$.

If we are given an XOS function $f\colon 2^V\to\mathbb{R}$ explicitly as the maximum of additive functions $f_i$ $(i \in [k])$, it is easy to maximize $f$ because we have
\begin{align*}
    \max_{X\subseteq V}f(X)
    =\max_{i\in[k]}\max_{X\subseteq V}f_i(X)
    =\max_{i\in[k]}\sum_{v\in V} \max\{f_i(v),0\}.
\end{align*}
However, if an XOS function $f$ is given by a value oracle like in submodular function maximization, maximization of $f$ becomes nontrivial even when the width of $f$ is restricted to $2$.
Our goal is to clarify what can and cannot be done in polynomial time for maximizing XOS functions given by value oracles.

\subsection*{Our contributions}
The main contribution in this paper is to give two tight bounds on polynomial-time approximation for maximizing XOS functions $f\colon 2^V\to\mathbb{R}$ given by value oracles.
Throughout the paper, we denote by $n$ the cardinality of the ground set $V$.
We also remark that, for simple representation, the running time of each presented algorithm is shown by the number of value oracle calls, which asymptotically dominates the total computational time for the remaining parts.

First, for the general case, we prove that the optimal approximation ratio is almost linear in $n$.
More precisely, we show the following three theorems.
Here, for $\alpha \geq 1$, a set $\tilde{X} \subseteq V$ is called an \emph{$\alpha$-maximizer} if $f(\tilde{X}) \geq \frac{1}{\alpha} \cdot \max_{X\subseteq V}f(X)$.
A deterministic algorithm is said to be an \emph{$\alpha$-approximation algorithm} if it always returns an $\alpha$-maximizer.
A randomized algorithm is so if it is true in expectation, i.e., the expected function value of its output is at least ${1 \over \alpha} \cdot \max_{X \subseteq V}f(X)$.

\begin{theorem}\label{thm:alg_unbounded}
For any $\epsilon > 0$,
there exists a deterministic $(\epsilon n)$-approximation algorithm for XOS maximization that runs in $O(n^{\lceil 1/\epsilon \rceil})$ time.
\end{theorem}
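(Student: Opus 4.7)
The plan is a straightforward enumeration argument. Set $k = \lceil 1/\epsilon \rceil$, and let the algorithm enumerate every subset $S \subseteq V$ with $|S| \le k$ (including $\emptyset$), query $f(S)$, and output the candidate with the largest observed value. This costs at most $\sum_{j=0}^{k}\binom{n}{j} = O(n^{k}) = O(n^{\lceil 1/\epsilon \rceil})$ oracle calls, matching the claimed running time.

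For the approximation guarantee, fix an optimal set $X^* \in \arg\max_{X\subseteq V} f(X)$ together with an arbitrary XOS representation $f(X)=\max_{i\in[m]}f_i(X)$ of $f$, and pick $i^*$ with $f(X^*) = f_{i^*}(X^*) = \sum_{v\in X^*} f_{i^*}(v)$. Since $\emptyset$ is a candidate and $f(\emptyset)=0$, we have $f(X^*)\ge 0$. Sort the elements of $X^*$ in decreasing order of $f_{i^*}$-value and let $T$ consist of the first $\min(k,|X^*|)$ of them. If $|X^*|\le k$, then $T=X^*$, so $f(T)=f(X^*)$. Otherwise $|T|=k$, and the elementary fact that the top $k$ terms of a sorted sequence contribute at least $k/|X^*|$ of the total gives
\begin{align*}
f(T) \;\ge\; f_{i^*}(T) \;=\; \sum_{v\in T}f_{i^*}(v) \;\ge\; \frac{k}{|X^*|}\sum_{v\in X^*}f_{i^*}(v) \;\ge\; \frac{k}{n}\,f(X^*).
\end{align*}
Because $|T|\le k$, the set $T$ is among those evaluated by the algorithm, so its output value is at least $f(T)\ge (k/n)\,f(X^*)$. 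Finally $n/k \le \epsilon n$ by the choice of $k$, which yields the required $(\epsilon n)$-approximation.

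I do not expect any real obstacle in this proof; the only design decision is choosing the enumeration threshold $k$ so that the $n/k$ loss matches the target ratio $\epsilon n$ while the enumeration cost remains $O(n^{\lceil 1/\epsilon \rceil})$. The core insight being exploited is that if an additive function witnesses the XOS optimum, then a constant-sized prefix of its sorted support already attains a proportional fraction of the value, which is exactly what a brute-force search over small sets can detect.
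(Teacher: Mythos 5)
Your proof is correct and follows essentially the same route as the paper: enumerate all subsets of size at most $\lceil 1/\epsilon\rceil$, and argue that the top-$\lceil 1/\epsilon\rceil$ elements of an optimal set under a witnessing additive function $f_{i^*}$ already capture a $\frac{\lceil 1/\epsilon\rceil}{|X^*|} \ge \frac{1}{\epsilon n}$ fraction of the optimum. The paper phrases the averaging step as $f_{i^*}(\hat X)/|\hat X| \ge f_{i^*}(X^*)/|X^*|$, which is identical to your ``top $k$ terms contribute at least $k/|X^*|$ of the total.''
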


\begin{theorem}\label{thm:alg_unbounded_rand}
For any $\epsilon > 0$,
there exists a randomized $\left({\epsilon n}/{\log n}\right)$-approximation algorithm for XOS maximization that runs in $O(n^{1/\epsilon+1} \log n)$ time.\footnote{In this paper, we always use the natural logarithm without explicitly indicating the base $e$.}
\end{theorem}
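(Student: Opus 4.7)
The plan is to analyse a very simple one-shot randomized algorithm: sample a uniformly random $s$-subset $R$ of $V$, where $s \coloneqq \lceil \log n / \epsilon \rceil$; enumerate all $2^s$ subsets $S \subseteq R$ via the value oracle; and return $\tilde X \coloneqq \arg\max_{S \subseteq R} f(S)$.

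My analysis pivots on one structural fact: if $X^* \subseteq V$ is an optimal set and $f_{i^*}$ is an additive function in the XOS representation achieving $\OPT = f(X^*) = f_{i^*}(X^*)$, then $f_{i^*}(v) \geq 0$ for every $v \in X^*$. Otherwise, $f(X^* \setminus \{v\}) \geq f_{i^*}(X^* \setminus \{v\}) = f_{i^*}(X^*) - f_{i^*}(v) > f_{i^*}(X^*) = f(X^*)$, contradicting the optimality of $X^*$. Consequently, because $X^* \cap R$ is one of the subsets enumerated by the algorithm,
\begin{align*}
f(\tilde X) \;\geq\; f(X^* \cap R) \;\geq\; f_{i^*}(X^* \cap R) \;=\; \sum_{v \in X^* \cap R} f_{i^*}(v),
\end{align*}
which is a sum of nonnegative terms. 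Taking expectations, using $\Pr[v \in R] = s/n$ and linearity,
\begin{align*}
\mathbb{E}\bigl[f(\tilde X)\bigr] \;\geq\; \frac{s}{n} \sum_{v \in X^*} f_{i^*}(v) \;=\; \frac{s}{n} \cdot \OPT \;\geq\; \frac{\log n}{\epsilon\, n} \cdot \OPT,
\end{align*}
which is exactly the $(\epsilon n / \log n)$-approximation required in expectation.

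It remains to control the running time. Each realization of $R$ costs $2^s$ oracle queries, and since $\log 2 < 1$ we have $2^s \leq 2 \cdot n^{\log 2 / \epsilon} = O(n^{1/\epsilon})$; this already fits within the stated $O(n^{1/\epsilon+1}\log n)$ bound, with ample room to spare for $O(n \log n)$ independent repetitions should one wish to boost the success probability via Markov's inequality and a union bound.

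The main obstacle I expect is conceptual rather than technical: one must resist the instinct to first build a deterministic ``skeleton'' (such as the top-$k$ subset used in Theorem~\ref{thm:alg_unbounded}) and then randomize only the ``extension.'' Instead, all of the randomness sits in the single choice of $R$, and the power of the method comes from exhaustively enumerating inside $R$, which in effect evaluates $f$ at the unseen set $X^* \cap R$ \emph{for free}. Turning the negative-valued elements of $V \setminus X^*$---which plague any direct attempt to estimate $f$ on a random sample---into a non-issue is exactly what brute-force enumeration over $2^s$ subsets buys us.
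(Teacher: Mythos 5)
Your proof is correct, and it takes a genuinely different route from the paper's. The paper's algorithm samples, for each candidate size $m$ up to $\lceil 2\log n/\epsilon\rceil$, roughly $n^{1/\epsilon+1}$ independent uniform $m$-subsets of $V$, and its analysis proceeds by (i) bounding the probability that at least one such sample lands inside $X^*$ (via a binomial-ratio estimate) and (ii) observing that, conditional on landing inside $X^*$, the sample is a uniform $m$-subset of $X^*$ with expected $f$-value at least $\tfrac{2}{\rho}\cdot\OPT$. Your algorithm instead draws a single random $s$-subset $R$ with $s=\lceil\log n/\epsilon\rceil$ and brute-forces all $2^s=O(n^{\ln 2/\epsilon})$ subsets of $R$, avoiding the landing-probability estimate and the conditioning entirely: the observation that $f_{i^*}$ is nonnegative on every element of $X^*$ (else removing the offending element improves $f$) lets you apply linearity of expectation directly to $f_{i^*}(X^*\cap R)$ and get $\mathbb{E}[f(\tilde X)]\ge\frac{s}{n}\OPT$ in one line. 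This is cleaner and, as you note, actually yields a better query complexity $O(n^{\ln 2/\epsilon})$, comfortably inside the claimed $O(n^{1/\epsilon+1}\log n)$. The paper's approach has the modest advantage of not requiring $s\le n$ a priori, whereas your sampling of an $s$-subset is ill-defined when $\log n/\epsilon>n$; you should dispatch that corner case by noting that it forces $n$ to be bounded by a constant depending only on $\epsilon$, in which case one enumerates all of $2^V$ in $O(1)$ time (the paper makes the analogous remark for $\rho<2e/(e-2)$). With that small patch the argument is complete.
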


On the hardness side, roughly speaking, we show that any $n^{1-\epsilon}$-approximation algorithm requires exponentially many oracle calls and any $o({n}/{\log n})$-approximation algorithm requires super-polynomially many oracle calls.
The following theorem is stated for randomized algorithms, which immediately implies the hardness in the deterministic sense (Corollary~\ref{cor:hard_unbounded}).

\begin{theorem}\label{thm:hard_unbounded}
For any $\epsilon > 0$, there exist a constant $c > 0$ and a distribution of instances of XOS maximization such that any randomized $n^{1 - \epsilon}$-approximation algorithm calls the value oracle at least $\delta \cdot 2^{\Omega(n^c)}$ times with probability at least $1 - \delta$ for any $0 < \delta < 1$.
Moreover, there exists a distribution of instances of XOS maximization such that any randomized $o({n}/{\log n})$-approximation algorithm calls the value oracle at least $\delta \cdot n^{\omega(1)}$ times with probability at least $1 - \delta$ for any $0 < \delta < 1$.
\end{theorem}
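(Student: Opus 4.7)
My plan is to apply Yao's minimax principle: construct a distribution $\mathcal D$ of XOS instances and show that any deterministic algorithm making few value-oracle queries fails, with high probability over the draw from $\mathcal D$, to output a set achieving the claimed approximation ratio. The randomized lower bound then follows by the usual averaging.

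For the first claim, fix $\epsilon>0$, put $K:=\lfloor n^{\epsilon/3}\rfloor$ and $s:=\lfloor n/2\rfloor$, and sample $S\subseteq V$ uniformly with $|S|=s$. Define $f_S$ as the pointwise maximum of three families of additive functions: the zero baseline; a ``planted'' function $f^*$ with $f^*(v)=1$ for $v\in S$ and $f^*(v)=-n$ for $v\notin S$; and a ``masking'' family $\{g_T:T\subseteq V,\ |T|\le K\}$ with $g_T(v)=1$ for $v\in T$ and $g_T(v)=0$ otherwise. A direct computation gives $f_S(X)=|X|$ whenever $|X|\le K$ (via the masking) or $X\subseteq S$ (via $f^*$), and $f_S(X)=0$ otherwise. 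Hence $\max f_S=s$, while every zero-query output has value at most $K$, giving trivial approximation ratio $s/K>n^{1-\epsilon}$. Queries of size at most $K$ are uninformative, and a query $X$ with $|X|>K$ returns $|X|$ iff $X\subseteq S$, an event of prior probability at most $(s/n)^{K+1}=2^{-K-1}$. An adaptive union bound shows that, after $q$ queries, with probability at least $1-q\cdot 2^{-K-1}$ no informative answer has been seen, after which the output $\tilde X$ is essentially independent of $S$ and $\Pr[\tilde X\subseteq S]\le(s/n)^{|\tilde X|}$. Choosing $q\le 2^{K-2}$ drives the success probability below $1/2$, proving the $2^{\Omega(n^c)}$ query lower bound for $c=\epsilon/3$, and the $\delta$-threshold form in the theorem follows by the same bookkeeping.

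For the second claim, naively taking $K=\omega(\log n)$ in the construction above would itself place the trivial ratio $s/K$ into $o(n/\log n)$, so that the zero-query algorithm would already be an $o(n/\log n)$-approximation and no query lower bound could hold. A more refined construction is needed, in which the trivial achievable approximation stays at $\Theta(n/\log n)$ while any super-constant improvement demands super-polynomially many queries. I would attempt this via a layered or multi-scale planted structure, for instance a nested sequence of hidden sets of decreasing sizes $s_1>s_2>\cdots$, arranged so that reaching the $i$-th level of approximation requires both a polynomial number of discovery queries per level and the composition of results across $i$ levels, yielding a cumulative $n^{\omega(1)}$ query cost as $i\to\infty$.

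The main obstacle is precisely this second construction: the single-scale argument of the first part cannot simultaneously keep the trivial ratio above $o(n/\log n)$ and deliver a super-polynomial query lower bound, so a genuinely new structural device is required. I expect the details of the layering---specifically how the hidden sets interact across scales, and how information revealed at one scale does not trivialize the next---to be the central technical content of the proof of the second part.
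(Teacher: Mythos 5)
Your construction and argument for the first part are essentially the same as the paper's. The paper plants a uniformly random half-set $S\subseteq V$, makes every non-subset-of-$S$ output worth a fixed threshold $\tau\approx n^{\epsilon/2}/2$ (via a small collection of additive masking functions), and shows via a union bound over queries --- packaged there as a clean Lemma~\ref{lem:exponential} about finding a hidden $s$-subset above a cardinality threshold $t$ --- that any algorithm making $o(2^\tau)$ queries fails to exceed value $\tau$ with high probability. Your first part reproduces exactly this with $K$ in place of $\tau$, up to a small slip: with your masking family $\{g_T:|T|\le K\}$, the value on a nonempty set $X\not\subseteq S$ is $\min(|X|,K)$, not $0$, but this does not affect the argument since only the threshold matters.

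The second part is where there is a genuine gap, and I want to flag a misdiagnosis. You correctly observe that if one fixes $\tau=\omega(\log n)$ once and for all, then even the zero-query answer achieves ratio $(n/2)/\tau=o(n/\log n)$ on that fixed distribution. But you conclude from this that ``a genuinely new structural device is required,'' and propose a layered/multi-scale construction you cannot complete. In fact the paper's proof of the second part is a one-line modification of the first: replace $n^{\epsilon'}/2$ by $\tau=\omega(\log n)$ and apply Lemma~\ref{lem:exponential} with $t=\tau$ to get a $\delta\cdot 2^{\Omega(\tau)}=\delta\cdot n^{\omega(1)}$ query lower bound for exceeding value $\tau$. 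The resolution of your concern lies purely in the quantifier structure: one rules out each particular $g(n)=o(n/\log n)$-approximation guarantee by choosing $\tau=\tau(n)$ \emph{between} $\omega(\log n)$ and $(n/2)/g(n)$ --- which is always possible precisely because $g=o(n/\log n)$ makes $(n/2)/g=\omega(\log n)$. With such a $\tau$, any output with $f$-value at most $\tau$ fails to be a $g$-maximizer (since $\tau<(n/2)/g$), and beating $\tau$ costs $2^{\Omega(\tau)}=n^{\omega(1)}$ queries. The threshold (and hence the distribution) is tuned to the target ratio; there is no single instance family that defeats all $o(n/\log n)$-approximation algorithms at once, and indeed no algorithm-independent distribution can, because a single oracle query reveals $\tau$. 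No layered or multi-scale structure is needed, and your proposed sketch, besides being incomplete, is aimed at a problem the paper's single-scale construction already solves.
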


As general XOS functions are too hard to maximize, we analyze the problem by restricting the width of the input XOS functions.
When the width is bounded by $k \geq 2$, we prove that the optimal approximation ratio is $k - 1$.
More precisely, we show the following two theorems.

\begin{theorem}\label{thm:alg_bounded}
There exists a deterministic algorithm to find a $(k-1)$-maximizer of a given $k$-XOS function in $O(k^2n)$ time for any $k \ge 2$ (even if $k$ is unknown).
In particular, when $k = 2$, it finds an exact maximizer in $O(n)$ time.
\end{theorem}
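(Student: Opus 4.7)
The plan is to design a simple iterative algorithm that maintains a short list of candidate sets and outputs the one with the largest $f$-value. I will process the elements of $V$ one at a time in a fixed order (e.g., sorted by $f(\{v\})$ in decreasing order); for each new element $v$ and each current candidate $S$, I will query $f(S \cup \{v\})$ and augment $S$ to $S \cup \{v\}$ precisely when the compatibility equality $f(S \cup \{v\}) = f(S) + f(\{v\})$ holds, spawning a new candidate seeded at $\{v\}$ if no existing candidate is compatible. The structural observation behind this rule is that the equality detects exactly when some additive function $f_j$ in the XOS representation attains the maximum on both $S$ and $\{v\}$, so that $S \cup \{v\}$ admits a coherent $f_j$-witness. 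Since a $k$-XOS function has at most $k$ distinct argmax indices, the number of candidates stays bounded by $k$, each new element incurs $O(k)$ oracle calls, and the total query count is $O(k^2 n)$ including a final evaluation pass; in particular, the algorithm does not need $k$ as input.

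For the approximation guarantee I will analyze the optimum $X^*$ together with its witness $f_{i^*}$ (so $f(X^*) = f_{i^*}(X^*)$), and partition $X^* = A_1 \cup \dots \cup A_k$ by assigning each $v \in X^*$ to the additive function $f_j$ dominant at the singleton $\{v\}$. Fractional subadditivity of XOS gives $\sum_{j=1}^k f(A_j) \ge f(X^*)$, so a direct pigeonhole yields a $k$-approximation from the best single $A_j$. To sharpen this to $k - 1$, I will exploit the fact that the witness part $A_{i^*}$ is absorbed along the $f_{i^*}$-thread: its $f$-value is subsumed by the best candidate following that thread rather than requiring a dedicated candidate of its own. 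A careful accounting then redistributes $f(A_{i^*})$ among the remaining $k - 1$ parts and shows that the best candidate built by the algorithm attains $f$-value at least $f(X^*)/(k - 1)$.

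For the $k = 2$ case the algorithm specializes to an $O(n)$-query procedure: pick a seed $v^* = \argmax_v f(\{v\})$, query $f(\{v^*, v\})$ for every other $v$, and partition the remaining elements into those compatible with $v^*$ and those incompatible. Together with the singletons and $\emptyset$, and (as a safety net for boundary configurations) the complementary sets $V \setminus \{v\}$, this produces $O(n)$ candidate sets that can all be evaluated in $O(n)$ queries. Exactness will follow from a case analysis on which of $P_1 = \{v : f_1(v) > 0\}$ or $P_2 = \{v : f_2(v) > 0\}$ is the true optimum: in each case, one of the constructed candidates coincides with the maximizer of the corresponding additive function.

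The main obstacle is sharpening the pigeonhole from $k$ to $k - 1$, which requires singling out the role of the witness part $A_{i^*}$ and showing that its contribution is implicitly captured by the thread-building rule rather than requiring a separate candidate. A secondary obstacle, specific to the $k = 2$ exact claim, is ensuring that the seed-and-compatibility procedure correctly reconstructs the partition of $V$ induced by the 2-XOS decomposition in corner cases, such as when the two additive functions agree on overlapping supports or when the optimum spans both sides of the seed-induced partition.
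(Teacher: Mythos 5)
The clique-building (``thread'') phase of your algorithm matches the paper's Algorithm~3 and correctly yields a pigeonhole $k$-approximation, but the sharpening to $k-1$ is where the proposal breaks down. You do not supply a concrete mechanism for ``redistributing $f(A_{i^*})$''; the vague claim that the witness part is ``absorbed along the $f_{i^*}$-thread'' has no backing construction. The paper's proof needs two extra families of candidate sets beyond the cliques $V_i$: the augmented sets $Y_i = V_i \cup \{v \notin V_i : f(V_i + v) > f(V_i)\}$, which recover the exact maximizer when $X^* = Y_p$, and crucially the sets $Z_{ij}^v = V_i \cup V_j \cup \{v\}$, which handle the case $X^* \subsetneq Y_p$ by exhibiting an index $q \neq p$ with $f_q(V_p + v) > f(V_p)$ and bounding $f(X^*)$ by $f_q(Z_{pq}^v) + \sum_{i \neq p,q} f_i(V_i) \le (k-1)f(\tilde X)$. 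Neither of these appears in your plan, and without them the argument cannot cross from $k$ to $k-1$ when the algorithm finds $\ell = k$ cliques.

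Your $k=2$ specialization is demonstrably incorrect as stated. Take $V = \{1,2,3,4,5\}$, $f_1 = (20,20,1,-1,-1)$, $f_2 = (5,5,5,5,5)$, so $f = (20,20,5,5,5)$ and $X^* = \{1,2,3\}$ with $f(X^*) = 41$. Seeding at $v^* = 1$, the compatible set is $\{1,2\}$ (since $f(\{1,3\}) = 21 \neq 25$), the incompatible set is $\{3,4,5\}$, and no singleton, complement $V\setminus\{v\}$, or $\emptyset$ attains $41$; the best candidate value is $40$. What saves the paper is precisely $Y_1 = V_1 \cup \{v : f(V_1+v) > f(V_1)\} = \{1,2,3\}$, which your ``compatible with the seed'' rule misses because it tests equality rather than strict positive marginal gain. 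So the essential idea you would need to add --- testing for strict marginal improvement on top of a clique, and, for general $k$, combining two cliques plus one element --- is exactly the content that distinguishes the paper's $(k-1)$-approximation from the easy $k$-approximation.
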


\begin{theorem}\label{thm:hard_bounded}
For any $k \geq 3$ and any $\epsilon > 0$, there exist a constant $c > 0$ and a distribution of instances of $k$-XOS maximization such that any randomized $(k - 1 - \epsilon)$-approximation algorithm calls the value oracle at least $\delta \cdot 2^{\Omega(n^c)}$ times with probability at least $1 - \delta$ for any $0 < \delta < 1$.
\end{theorem}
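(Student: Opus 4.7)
The plan is to apply Yao's minimax principle, reducing the claim to constructing a distribution $\mathcal{D}$ over $k$-XOS instances on $V = [n]$ such that any deterministic algorithm using fewer than $\delta \cdot 2^{\Omega(n^c)}$ value-oracle queries outputs a $(k-1-\epsilon)$-maximizer with probability less than $1-\delta$, for some constant $c > 0$ depending only on $k$ and $\epsilon$.

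For $\mathcal{D}$, I would plant a hidden random structure---say a uniformly random subset $T \subseteq V$ of polynomially small size $s = \Theta(n^c)$---and define $f = \max(f_1, \dots, f_k)$, where $f_1, \dots, f_{k-1}$ are additive clauses each rewarding $T$-elements and penalizing non-$T$-elements, and $f_k$ is a single additive ``baseline'' clause. The parameters would be tuned so that $\max_{X} f(X) = M$ is attained at $X = T$ while every $X$ whose overlap $|X \cap T|$ lies in its hypergeometric concentration band $|X|s/n \pm O(\sqrt{|X|\,s/n})$ satisfies $f(X) \leq M/(k-1-\epsilon)$.

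The analysis then proceeds in three steps: (i) a hypergeometric Chernoff bound shows that for any fixed query $X$, $|X \cap T|$ lies in the typical band except with probability $e^{-\Omega(n^c)}$; (ii) a union bound over the at most $q < 2^{o(n^c)}$ possible queries of any deterministic algorithm yields that with probability at least $1-\delta$ none of them escape the band; and (iii) conditional on this event, the oracle's answers on the algorithm's queries are determined by $|X|$ alone, so the algorithm's transcript---and hence its output $\tilde{X}$---is essentially independent of $T$. This forces $\mathbb{E}[f(\tilde{X})] \leq M/(k-1-\epsilon)$, and Yao's principle then lifts the bound to randomized algorithms.

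The main obstacle is designing the clauses so that small queries do not leak information about $T$. If the planted clauses assigned sharply different values to $v \in T$ than to $v \notin T$, the algorithm could decode $T$ in $O(n)$ singleton queries; requiring the value $f(\{v\})$ to be constant in $v$, however, tends to flatten $f$ into a linear function and destroy the gap. The calibration therefore has to suppress each planted clause below the baseline $f_k$ on every query with typical $T$-overlap, yet let the planted clauses dominate at $T$ itself. Showing that this delicate balance can be struck with ratio exactly $k-1-\epsilon$, and handling the interplay among the $k-1$ planted clauses that produces the $(k-1)$-factor, is the technical heart of the argument.
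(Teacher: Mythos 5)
Your high-level framework---Yao's principle, a planted hidden structure, and an indistinguishability argument---matches the paper's general technique (its key Lemma~\ref{lem:exponential} is exactly a ``hidden random set'' lemma of this kind, though proved by a direct counting argument rather than hypergeometric concentration). However, you explicitly leave the central difficulty unresolved: ``showing that this delicate balance can be struck with ratio exactly $k-1-\epsilon$, and handling the interplay among the $k-1$ planted clauses that produces the $(k-1)$-factor, is the technical heart of the argument.'' That is a genuine gap, and the specific construction you sketch points the wrong way: you take $f_1,\dots,f_{k-1}$ all to be hidden clauses tied to the \emph{same} random set $T$, with $f_k$ a fixed baseline. But $k-1$ clauses that all reward the same $T$ and penalize its complement are essentially redundant; since their supports coincide, there is no mechanism by which they collectively outvalue the baseline by a factor of $k-1$. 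Moreover, with everything depending on one shared $T$, making singleton values non-revealing while keeping the gap alive is exactly the tension you flag, and nothing in the sketch resolves it.

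The paper's construction inverts the roles and, crucially, partitions the ground set. It sets $V = V_1 \sqcup \cdots \sqcup V_{k-1}$ with geometrically increasing block sizes $|V_i| = \tilde{n}^i$ (so $n = \Theta(\tilde{n}^{k-1})$). The clauses $f_1,\dots,f_{k-1}$ are \emph{deterministic}: $f_i$ is supported on its own block $V_i$ with uniform per-element weight $\tilde{n}^{k-i}$, so $\max_X f_i(X)=\tilde{n}^k$ for every $i\le k-1$. The \emph{single} hidden clause $f_k$ picks a random dense subset $S_i\subseteq V_i$ of each block (density $1-\gamma$), rewards elements of $\bigcup_i S_i$ with weight $(1-\gamma)\tilde{n}^{k-i}$, and heavily penalizes all remaining elements; the geometric block sizes ensure $f_k$ collects an equal $\approx(1-\gamma)^2\tilde{n}^k$ from each of the $k-1$ blocks, giving $\max_X f(X) = (k-1)(1-\gamma)^2\tilde{n}^k > (k-1-\epsilon)\tilde{n}^k$, while any $X$ whose value is realized by some $f_i$ with $i\le k-1$ is capped at $\tilde{n}^k$. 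Beating $\tilde{n}^k$ therefore forces $f(X)=f_k(X)$, and a short claim shows this requires exhibiting a subset of some $S_i$ of size $\Omega(\tilde{n})$, which Lemma~\ref{lem:exponential} shows requires $2^{\Omega(\tilde{n})} = 2^{\Omega(n^{1/(k-1)})}$ oracle calls. This partition-and-scale mechanism is precisely the missing ingredient your proposal does not supply.
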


In addition to the $2$-XOS functions,
we also show another special class of $k$-XOS functions that can be exactly maximized in polynomial time (Theorem~\ref{thm:alg_const}).

Finally, we see that the number of \emph{order-different}\footnote{Two set functions $f, g \colon 2^V \to \mathbb{R}$ are \emph{order-equivalent} if [\,$f(X) \leq f(Y)$ if and only if $g(X) \leq g(Y)$\,] for every $X, Y \subseteq V$, and \emph{order-different} otherwise.} XOS functions with bounded width is only single exponential in $n$, specifically $2^{\Theta(n^2)}$ (Theorems~\ref{thm:number_upper} and~\ref{thm:number_lower}), whereas there are doubly exponentially many, specifically $2^{2^{\Theta(n)}}$, order-different XOS functions in general (Corollary~\ref{cor:general_number}). In this sense, $k$-XOS functions look much more tractable than general XOS functions. 
Thus, Theorem~\ref{thm:hard_bounded} is somewhat counterintuitive, because it shows that there is no polynomial-time algorithm for finding a maximizer of a $k$-XOS function even when $k = 3$ if it is given by the value oracle.

\subsection*{Related work}
The problem of maximizing monotone set functions under cardinality constraint has received much attention.
For the submodular case, which is a special case of the XOS case, the greedy algorithm is the best possible and returns an $e/(e-1)$-maximizer~\cite{nemhauser1978best,nemhauser1978analysis}.
For the XOS case, no polynomial-time algorithm can achieve $n^{1/2-\epsilon}$-approximation for any fixed $\epsilon>0$ in the value oracle model~\cite{MSV2008,singer2010}.
For the \emph{subadditive} case, which includes the XOS case (see also Appendix~\ref{app:class}),
Badanidiyuru et al.~\cite{BDO2012} gave a tight $2$-approximation algorithm in the \emph{demand oracle model},
in which we can access a set function by asking the function for a demand $S\in\argmax_{X\subseteq V}\{f(X)-\sum_{v\in X}p_v\}$ under each price vector $p\in\mathbb{R}^V$.
It should be noted that their algorithm does not imply a $2$-approximation algorithm in the value oracle model. 

Polynomial-time approximation for submodular function maximization in the value oracle model has been studied under various combinatorial constraints.
Sviridenko~\cite{Sviridenko04} gave an $e/(e-1)$-approximation algorithm for maximizing a monotone submodular function subject to a knapsack constraint.
Kulik et al.~\cite{KST2013} extended the result to the multiple knapsack constraints case.
For maximizing a monotone submodular function subject to a matroid constraint, Calinescu et al.~\cite{CCPV2011} provided an $e/(e-1)$-approximation algorithm.
The last one is randomized and achieves the optimal approximation ratio in expectation, and the current best guarantee by a deterministic algorithm is slightly better than $2$ \cite{BFG2019}.

\subsection*{Organization}
The rest of this paper is organized as follows.
First, in Section~\ref{sec:algorithms}, we prove Theorems~\ref{thm:alg_unbounded}, \ref{thm:alg_unbounded_rand}, and \ref{thm:alg_bounded} by presenting and analyzing polynomial-time approximation (or sometimes exact) algorithms.
Next, in Section~\ref{sec:hardness}, we show the hardness results,
Theorems~\ref{thm:hard_unbounded} and \ref{thm:hard_bounded}.
Finally, in Section~\ref{sec:number}, we discuss the number of order-different XOS functions.
\section{Algorithms}\label{sec:algorithms}
Let $V$ be a nonempty finite set of size $n$.
Throughout this section, for the input XOS function $f \colon 2^V \to \mathbb{R}$, we may assume that $f(v) > 0$ for all $v \in V$, because any $v \in V$ with $f(v) \le 0$ does not contribute to increasing the function values.
This assumption can be tested in linear time, and when it is violated, one can modify the instance just by removing such unnecessary elements.

For each $X \subseteq V$, we define $I(X) \coloneqq \{\, i \mid f_i(X) = f(X) \,\}$.
In addition, for each index $i$, we call $V_i^* \coloneqq \{\, v \in V \mid f_i(v) = f(v) \,\}$ a \emph{clique with respect to $i$}.
A subset of $V$ is called a \emph{clique} if it is a clique with respect to some $i$.

\subsection{Deterministic \texorpdfstring{$(\epsilon n)$}{(epsilon n)}-approximation for general XOS maximization}\label{sec:(en)-approx}
In this section, we prove Theorem~\ref{thm:alg_unbounded}.
In short, for any $\epsilon > 0$, an $(\epsilon n)$-maximizer is obtained just by taking the best one among all subsets of $V$ of size at most $\lceil 1/\epsilon \rceil$.
A formal description is given in Algorithm~\ref{alg:epsilon}, which clearly calls the value oracle $\sum_{i=0}^{\min\{\lceil 1/\epsilon \rceil,\,n\}}\binom{n}{i} = O(n^{\lceil 1/\epsilon \rceil})$ times.
We take the maximum in Line~\ref{line:2-} while successively updating $X \in \mathcal{X}$ in constant time per set, i.e., adopt an appropriate generator for $\cal{X}$ (cf.~\cite[\S~7.2.1.3]{knuth2005art}).

\begin{algorithm}[htb]
  \caption{A deterministic $(\epsilon n)$-approximation algorithm for XOS maximization}\label{alg:epsilon}
  \SetKwInOut{Input}{Input}\Input{An XOS function $f$ on $V$ (with $f(v) > 0$ for all $v \in V$)}
  \SetKwInOut{Output}{Output}\Output{An $(\epsilon n)$-maximizer $\Xalg \subseteq V$ of $f$}
  Let $\mathcal{X} \ot \{\, X \subseteq V \mid |X| \leq \lceil 1/\epsilon \rceil \,\}$\tcc*{Construct an appropriate generator.}
  \Return $\Xalg \in \argmax_{X\in \mathcal{X}}f(X)$\;\label{line:2-}
\end{algorithm}

The following claim completes the proof of Theorem~\ref{thm:alg_unbounded}.

\begin{claim}
  The output $\Xalg \subseteq V$ of Algorithm~\ref{alg:epsilon} is an $(\epsilon n)$-maximizer of $f$.
\end{claim}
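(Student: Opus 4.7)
The plan is to use a standard averaging argument: pick an optimal set $\Xopt$, fix an additive component $f_i$ that certifies $f(\Xopt)$, and show that the $\lceil 1/\epsilon \rceil$ top elements of $\Xopt$ (with respect to $f_i$) already contribute a $1/(\epsilon n)$ fraction of the optimum. Since such a set lies in $\mathcal{X}$, it is a lower bound on $f(\Xalg)$.

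First I would fix notation. Let $\Xopt \in \argmax_{X \subseteq V} f(X)$ and set $\OPT \coloneqq f(\Xopt)$. Pick any $i \in I(\Xopt)$, so $f(\Xopt) = f_i(\Xopt) = \sum_{v \in \Xopt} f_i(v)$. A short optimality observation shows that $f_i(v) \geq 0$ for every $v \in \Xopt$: if some $v \in \Xopt$ had $f_i(v) < 0$, then $f(\Xopt \setminus \{v\}) \geq f_i(\Xopt \setminus \{v\}) > f_i(\Xopt) = \OPT$, contradicting the optimality of $\Xopt$. Let $k \coloneqq \lceil 1/\epsilon \rceil$ and $m \coloneqq |\Xopt|$.

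Next I would split into two cases. If $m \leq k$, then $\Xopt \in \mathcal{X}$ and $f(\Xalg) \geq f(\Xopt) = \OPT$, which is certainly at least $\OPT/(\epsilon n)$. Otherwise $m > k$. Order the elements of $\Xopt$ as $v_1, v_2, \dots, v_m$ so that $f_i(v_1) \geq f_i(v_2) \geq \cdots \geq f_i(v_m) \geq 0$, and let $T \coloneqq \{v_1, \dots, v_k\}$. By averaging on nonnegative terms,
\[
  f_i(T) \;=\; \sum_{j=1}^{k} f_i(v_j) \;\geq\; \frac{k}{m} \sum_{j=1}^{m} f_i(v_j) \;=\; \frac{k}{m} \cdot \OPT \;\geq\; \frac{k}{n} \cdot \OPT.
\]

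Finally I would assemble the conclusion. Since $T \subseteq V$ and $|T| = k \leq \lceil 1/\epsilon \rceil$, we have $T \in \mathcal{X}$, so $f(\Xalg) \geq f(T) \geq f_i(T) \geq \frac{k}{n} \OPT \geq \frac{1}{\epsilon n} \OPT$, i.e., $\Xalg$ is an $(\epsilon n)$-maximizer. There is essentially no hard step; the only thing to be careful about is the sign of $f_i(v)$ for $v \in \Xopt$, which is needed to justify the averaging inequality but follows immediately from optimality of $\Xopt$.
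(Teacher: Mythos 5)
Your proof is correct and follows essentially the same route as the paper's: fix $i \in I(\Xopt)$, take the top $\lceil 1/\epsilon\rceil$ elements of $\Xopt$ by $f_i$-value, and compare their $f_i$-average to the overall $f_i$-average. Your extra observation that $f_i(v)\ge 0$ for all $v\in\Xopt$ is a nice sanity check but is not actually needed, since the inequality ``average of the top $k$ sorted terms $\ge$ average of all $m$ terms'' holds regardless of sign.
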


\begin{proof}
Let $\Xopt \subseteq V$ be a maximizer of $f$.
If $|\Xopt| \leq \lceil 1/\epsilon \rceil$, then $\Xopt \in \mathcal{X}$ and hence $f(\Xalg) = f(\Xopt)$.
Otherwise, fix an index $i \in I(\Xopt)$ and let $\hat{X}$ be the set of top $\lceil 1/\epsilon \rceil$ items in $\Xopt$ with the highest values according to $f_i$.
Then, we have $f_i(\hat{X})/|\hat{X}|\ge f_i(\Xopt)/|\Xopt|$ and $\hat{X}\in\mathcal{X}$ by $|\hat{X}|=\lceil 1/\epsilon \rceil$. 
Since $|\Xopt| \leq n$, we have\par
\vspace{\abovedisplayskip}
\hfill$\displaystyle
f(\Xalg) \ge f(\hat{X}) \ge f_i(\hat{X}) \ge \frac{|\hat{X}|}{|\Xopt|} \cdot f_i(\Xopt) \ge \frac{1}{\epsilon n}\cdot f_i(\Xopt) = \frac{1}{\epsilon n}\cdot f(\Xopt).$\hfill
\end{proof}

\subsection{Randomized \texorpdfstring{$(\epsilon n/\log n)$}{(epsilon n)}-approximation for general XOS maximization}\label{sec:(en/logn)-approx}
In this section, we prove Theorem~\ref{thm:alg_unbounded_rand}.
We provide a polynomial-time randomized algorithm whose approximation ratio is $\rho\coloneqq \epsilon n/\log n$.
Without loss of generality, we assume $\rho \geq 2e/(e-2)$ since otherwise $n$ is bounded by a constant depending on $\epsilon$ and hence we can find a maximizer in constant time.

A formal description is given in Algorithm~\ref{alg:randomized_general},
which clearly calls the value oracle $O(n^{1/\epsilon+1} \log n)$ times.
Intuitively, our algorithm first guesses the cardinality of a maximizer $\Xopt$, and then returns the best one among polynomially many samples of size $m = \lceil2|\Xopt|/\rho\rceil$.
Since a uniformly random subset of $\Xopt$ of size $m$ has the expected function value $\frac{m}{|\Xopt|} \cdot f(\Xopt) \geq \frac{2}{\rho}\cdot f(\Xopt)$, if such a subset is sampled with probability at least $1/2$, then the output is a $\rho$-maximizer in expectation.
The following claim gives an upper bound on the number of sufficient samples, which completes the proof of Theorem~\ref{thm:alg_unbounded_rand}.

\begin{algorithm}[htb]
  \caption{A randomized $(\epsilon n/\log n)$-approximation algorithm for XOS maximization}\label{alg:randomized_general}
  \SetKwInOut{Input}{Input}\Input{An XOS function $f$ on $V$ (with $f(v) > 0$ for all $v \in V$)}
  \SetKwInOut{Output}{Output}\Output{An $(\epsilon n/\log n)$-maximizer $\Xalg \subseteq V$ of $f$ in expectation}
  Let $\mathcal{X}\ot\emptyset$\;
  \For{$m \ot 1,2,\dots, \lceil (2 \log n) / \epsilon \rceil$}{
    \For{$t\ot 1,2,\dots, \lceil n^{1/\epsilon + 1}\rceil$}{
      Sample $X_{m,t}$ uniformly at random among the subsets of $V$ of size $m$\;
      $\mathcal{X} \ot \mathcal{X}\cup\{X_{m,t}\}$\;
    }
  }
  \Return $\Xalg \in \argmax_{X\in \mathcal{X}}f(X)$\; \label{line:rand_6}
\end{algorithm}

\begin{claim}
  For any subset $\Xopt \subseteq V$ with $|\Xopt| \ge 2$,
  if we sample a subset of $V$ of size $m = \lceil 2|\Xopt| / \rho \rceil$ uniformly at random $\lceil n^{1/\epsilon + 1} \rceil$ times independently, then at least one sample is a subset of $\Xopt$ with probability at least $1/2$.
\end{claim}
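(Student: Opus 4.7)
The plan is a standard concentration-style argument.  Set $k \coloneqq |\Xopt|$, $p \coloneqq \binom{k}{m}/\binom{n}{m}$, and $T \coloneqq \lceil n^{1/\epsilon+1}\rceil$.  By independence of the samples, $\Pr[\text{no sample is a subset of }\Xopt] = (1-p)^T \le e^{-pT}$, so it suffices to establish a lower bound of the form $pT \ge \ln 2$.

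I would first lower-bound $p$.  Expanding $p = \prod_{i=0}^{m-1}(k-i)/(n-i)$ and replacing every factor by the uniform bound $(k-m+1)/n$ gives $p \ge ((k-m+1)/n)^m$.  Combining $m \le 2k/\rho + 1$ with the standing hypothesis $\rho \ge 2e/(e-2)$—which is precisely the threshold $1 - 2/\rho \ge 2/e$—yields $k - m + 1 \ge 2k/e$, and hence the workable estimate $p \ge (2k/(en))^m$.

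I would then verify $pT \ge \ln 2$ by cases on $m$.  If $k \le \rho/2$ then $m = 1$ and $pT \ge 2 n^{1/\epsilon}$, trivially exceeding $\ln 2$.  Otherwise $m \ge 2$ and $k > \rho/2$; I would take logarithms and set $t \coloneqq k/n$.  The dominant contribution $(2k/\rho)\log(2k/(en)) = (2\log n/\epsilon)\cdot t\log(2t/e)$ is at least $-\log n/\epsilon$, because the one-variable function $t\mapsto t\log(2t/e)$ attains its minimum $-\tfrac12$ at $t = \tfrac12$.  The residual $\log(2k/(en))$ coming from the ceiling is at least $\log\epsilon - 1 - \log\log n$, since $k > \rho/2$ forces $2k/(en) > \epsilon/(e\log n)$.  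Adding $\log T \ge (1/\epsilon+1)\log n$ gives $\log(pT) \ge \log n + \log\epsilon - 1 - \log\log n$, which exceeds $\log\ln 2$ directly from the same standing hypothesis rewritten as $\log n - \log\log n \ge \log(2e/(e-2)) + \log(1/\epsilon)$.

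The main obstacle is a deceptive direction of approximation: the clean-looking bound $p \ge (k/n)^m$ is in fact an \emph{upper} bound on $p$, so one is forced to use the slightly weaker $((k-m+1)/n)^m$, at the cost of a constant $2/e$ inside the exponential base.  This constant is exactly balanced by the minimum $-\tfrac12$ of $t\log(2t/e)$, which is why the specific threshold $\rho \ge 2e/(e-2)$—rather than some rounder number—is the one assumed at the start of the section.
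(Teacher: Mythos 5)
Your proof is correct and follows essentially the same route as the paper: both start from the lower bound $p \ge (2|\Xopt|/(en))^m$ obtained via $m-1 \le 2|\Xopt|/\rho$ together with $1-2/\rho \ge 2/e$, and both invoke the same calculus fact (your $t\log(2t/e) \ge -\tfrac12$ is exactly the paper's $x^x \ge e^{-1/e}$ under the substitution $x = 2t/e$). The case split on $m=1$ versus $m\ge 2$ is an unnecessary detour---the paper bounds the residual factor uniformly by $2|\Xopt|/(en) \ge 1/n$, using only $|\Xopt| \ge 2$, and concludes $p \ge n^{-(1/\epsilon+1)}$ directly without needing the extra estimate $|\Xopt| > \rho/2$.
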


\begin{proof}
The probability that a uniformly random subset of $V$ of size $m$ is a subset of $\Xopt$ is
\begin{align*}
\frac{\binom{|\Xopt|}{m}}{\binom{n}{m}}
&=\frac{|\Xopt|}{n}\cdot\frac{|\Xopt|-1}{n-1}\cdot\cdots\cdot\frac{|\Xopt|-m+1}{n-m+1}
\ge \left(\frac{|\Xopt|(1-2/\rho)}{n}\right)^m
\ge \left(\frac{2|\Xopt|}{en}\right)^m\\
&\ge \left(\frac{2|\Xopt|}{en}\right)^{2|\Xopt|/\rho+1}
\ge \frac{1}{n}\cdot \left(\frac{2|\Xopt|}{en}\right)^{\frac{2|\Xopt|\log n}{\epsilon n}}
\ge \frac{1}{n}\cdot \left(\frac{1}{e}\right)^{\frac{\log n}{\epsilon}}
= \frac{1}{n^{\frac{1}{\epsilon}+1}},
\end{align*}
where the second inequality follows from $\rho \geq 2e/(e-2)$ and the last holds as $x^x\ge (1/e)^{1/e}$ for all $x\in[0,1]$.
Hence, at least one among $\lceil n^{1/\epsilon + 1} \rceil$ samples is a subset of $\Xopt$ with probability at least\par
\vspace{\abovedisplayskip}
\hfill$\displaystyle
1-\left(1-\frac{1}{n^{1/\epsilon + 1}}\right)^{n^{1/\epsilon + 1}}
\ge 1-\frac{1}{e}
\ge \frac{1}{2}.$\hfill
\end{proof}
\vspace{\abovedisplayskip}

We remark that we can obtain a $\rho$-maximizer with high probability if we take more samples.
Let $X$ be a uniformly random subset of $\Xopt$ of size $m$.
Then, $f(X)$ is at least $f(\Xopt)/\rho$ with probability at least $1/\rho$, because $f(X)$ is a random variable that takes a value in $[0,f(\Xopt)]$ and whose expectation is at least $2f(\Xopt)/\rho$.
Hence, if we take $\lceil 2\rho \log n \rceil = \lceil 2\epsilon n \rceil$ times more samples, then we obtain a $\rho$-maximizer with probability at least 
\begin{align*}
1-\left(1-\frac{1}{2} \cdot \frac{1}{\rho}\right)^{\lceil 2\rho \log n \rceil}
\ge 1-\left(\frac{1}{e}\right)^{\log n} = 1-\frac{1}{n}.
\end{align*}

It remains open whether this algorithm can be derandomized, i.e., whether one can achieve the approximation ratio $O(n / \log n)$ by a deterministic algorithm.

\subsection{Maximizing \texorpdfstring{$2$}{2}-XOS functions exactly}\label{sec:exact}
In this section, as a step toward $(k-1)$-approximation for $k$-XOS maximization, we present a linear-time algorithm for finding an exact maximizer of a given $2$-XOS function $f$.

The algorithm is formally described in Algorithm~\ref{alg:exact}, which is intuitively as follows.
First, in Lines~\ref{line:exact_1}--\ref{line:exact_3}, it computes a clique $V_1^* = \{\, v \in V \mid f_1(v) = f(v) \,\}$ (cf.~Lemma~\ref{lem:clique}).
If $V_1^* = V$, then it is a maximizer of $f$ because for every $X \subseteq V$,
\[f(X) \leq \sum_{v \in X}f(v) = \sum_{v \in X} f_1(v) \leq \sum_{v \in V}f_1(v) = f_1(V) \leq f(V).\]
Otherwise, it successively computes the other clique $V_2^* = \{\, v \in V \mid f_2(v) = f(v) \,\}$ in Lines~\ref{line:exact_5}--\ref{line:exact_7} (cf.~Lemma~\ref{lem:clique}).
Finally, in Line~\ref{line:exact_8}, it creates a candidate $Y_i$ for a maximizer of $f$ from each clique $V_i^*$ by adding all the elements which have additional positive contributions to the function value $f(V_i^*)$ (cf.~Lemma~\ref{lem:exact}).

\begin{algorithm}[htb]
  \caption{An exact algorithm for $2$-XOS maximization}\label{alg:exact}
  \SetKwInOut{Input}{Input}\Input{A $2$-XOS function $f$ on $V$ (with $f(v) > 0$ for all $v \in V$)}
  \SetKwInOut{Output}{Output}\Output{An exact maximizer $\Xalg \subseteq V$ of $f$}
  \SetKwFor{ForEach}{for each}{do}{endfall}
  Pick $v_1\in V$ and let $V_1 \ot\{v_1\}$\;\label{line:exact_1}
  \ForEach{$u\in V - v_1$}{
    \lIf{$f(V_1+u)=f(V_1)+f(u)$}{let $V_1 \ot V_1 + u$}\label{line:exact_3}
  }  
  \lIf{$V_1 = V$}{\Return $V_1$}
  Pick $v_2 \in V \setminus V_1$ and let $V_2 \ot\{v_2\}$\;\label{line:exact_5}
  \ForEach{$u\in V - v_2$}{
    \lIf{$f(V_2+u)=f(V_2)+f(u)$}{let $V_2 \ot V_2 + u$}\label{line:exact_7}
  }
  Let $Y_{i}\ot V_i\cup\{\, v\in V \setminus V_i \mid f(V_i+v)>f(V_i) \,\}$ for each $i = 1, 2$\;\label{line:exact_8}
  \Return $\Xalg \in \argmax_{X \in \{Y_1, Y_2\}}f(X)$\;
\end{algorithm}

The running time is clearly bounded by $O(n)$, and the correctness is assured as follows.
First, we see that $V_1$ and $V_2$ computed in Algorithm~\ref{alg:exact} are indeed the cliques.

\begin{lemma}\label{lem:clique}
At the end of Algorithm~\ref{alg:exact}, $V_1$ is a clique, and $V_2$ is the other clique with $V_1 \cup V_2 = V$ if it is computed (i.e., unless $V = V_1$).
\end{lemma}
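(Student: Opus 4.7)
The plan is to isolate two invariants of the first greedy loop building $V_1$---namely (I1) $f(V_1) = \sum_{v \in V_1} f(v)$ throughout the loop, and (I2) consequently $V_1 \subseteq V_j^*$ for some $j \in \{1,2\}$---and then to rule out strict inclusion at termination. Invariant (I1) holds at initialization (since $V_1 = \{v_1\}$) and is immediately preserved by the acceptance test $f(V_1+u) = f(V_1) + f(u)$. For (I2), pick any $j \in \{1,2\}$ attaining $f(V_1) = f_j(V_1)$; then $\sum_{v \in V_1} f_j(v) = f_j(V_1) = \sum_{v \in V_1} f(v)$, and since $f_j(v) \le f(v)$ pointwise, equality forces $f_j(v) = f(v)$ for every $v \in V_1$, i.e., $V_1 \subseteq V_j^*$.

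To upgrade (I2) to equality $V_1 = V_j^*$ at termination, I would argue by contradiction. If some $w \in V_j^* \setminus V_1$ existed, let $V_1'$ denote the state of $V_1$ at the moment the loop processed $w$. Since $V_1' + w \subseteq V_j^*$, the additive function $f_j$ agrees with $f$ pointwise on this set, so $f_j(V_1' + w) = \sum_{v \in V_1' + w} f(v)$. Combined with the XOS subadditivity bound $f(V_1' + w) \le \sum_{v \in V_1' + w} f(v)$ and with the trivial $f(V_1' + w) \ge f_j(V_1' + w)$, this collapses to $f(V_1' + w) = f(V_1') + f(w)$, so the acceptance test fires and $w$ would have been added---contradicting $w \notin V_1$. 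Hence $V_1 = V_j^*$.

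For $V_2$, assume the second loop runs (i.e., $V_1 \ne V$). Then $v_2 \in V \setminus V_j^*$; but every element $v \in V$ satisfies $f(v) = \max\{f_1(v), f_2(v)\}$, so $V_1^* \cup V_2^* = V$ and therefore $v_2 \in V_{3-j}^*$. Replaying the identical two-invariant argument for the second loop then yields $V_2 = V_{3-j}^*$, giving $V_1 \cup V_2 = V_1^* \cup V_2^* = V$ as required.

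The step I expect to demand the most care is the passage from (I1) to (I2): the associated index $j$ need not be uniquely determined---for instance, if at some point $V_1 \subseteq V_1^* \cap V_2^*$---and could in principle ``switch'' as the loop adds new elements. The cleanest way to side-step this, as above, is to refrain from tracking a single ``current index'' throughout the construction and instead commit to a specific $j$ only at termination, at which point the argument cares only that one exists.
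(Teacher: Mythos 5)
Your proposal is correct, and it takes a mildly different route from the paper's. The paper carries the richer invariant $I(V_1) = \bigcap_{u \in V_1} I(u)$ through the loop (the set of all indices still compatible with $V_1$), observes that the acceptance test fires iff $I(V_1) \cap I(u) \neq \emptyset$ and that firing shrinks $I(V_1)$ to $I(V_1) \cap I(u)$, and then reads off directly that the final $V_1$ equals $V_{i'}^*$ for every $i' \in I(V_1)$. You instead maintain only the scalar invariant $f(V_1) = \sum_{v\in V_1} f(v)$, commit to a witness index $j$ only at termination, and use a contradiction argument (together with the monotone growth $V_1' \subseteq V_1^{\mathrm{final}} \subseteq V_j^*$) to show the loop could not have rejected any $w \in V_j^* \setminus V_1$. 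What your version buys is precisely the side-step you flag at the end: there is no need to worry about the compatible index ``switching'' mid-loop, because you never pin it down until the loop has ended. What the paper's version buys is a direct, construction-style argument with no appeal to contradiction, and it generalizes verbatim to the $k$-XOS setting in Lemma~\ref{lem:ell} (the same index-set bookkeeping shows each $V_i$ is a clique and the cliques are pairwise distinct, which is what bounds $\ell \le k$); your argument also generalizes, but the paper's presentation makes that reuse more visible. Both are sound, and your treatment of the $V_2$ case (deducing $j' = 3-j$ from $v_2 \notin V_j^*$) closes the argument correctly.
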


\begin{proof}
Each $V_i$ $(i = 1, 2)$ is created as a singleton $\{v_i\}$ in Line~\ref{line:exact_1} or \ref{line:exact_5}, and successively updated by adding $u \in V - v_i$ if $f(V_i + u) = f(V_i) + f(u)$ in Line~\ref{line:exact_3} or \ref{line:exact_7}.
The condition is satisfied if and only if $I(V_i) \cap I(u) \neq \emptyset$, and if satisfied, then $I(V_i + u) = I(V_i) \cap I(u)$ holds.
Hence, after the iteration, $I(V_i) = \bigcap_{u\in V_i} I(u)$ and $I(V_i) \cap I(u') = \emptyset$ for each $u' \in V \setminus V_i$.
This means that $V_i = V_{i'}^*$ for each $i' \in I(V_i)$.
When $V_1 \neq V$, since $v_2 \in V_2$ is picked out of $V_1$, we have $V_2 \neq V_1$.
Moreover, since at least one of $f(v) = f_1(v)$ and $f(v) = f_2(v)$ holds for each $v \in V$ by definition, we have $V_1 \cup V_2 = V$.
\end{proof}

Suppose that $V_1 = V_1^* \neq V$ (where exchange $V_1$ and $V_2$ if necessary),
and then the following lemma implies the correctness of Algorithm~\ref{alg:exact}.

\begin{lemma}\label{lem:exact}
At the end of Algorithm~\ref{alg:exact}, $Y_1$ or $Y_2$ is a maximizer of $f$.
\end{lemma}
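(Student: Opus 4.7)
The plan is to prove the stronger statement that $f(Y_i) \ge f_i(A_i)$ for each $i \in \{1,2\}$, where $A_i \coloneqq \{\, v \in V \mid f_i(v) > 0 \,\}$ is the obvious $f_i$-maximizer. Since $f = \max(f_1, f_2)$ forces $\max_{X \subseteq V} f(X) = \max(f_1(A_1), f_2(A_2))$, the stronger statement immediately gives $\max(f(Y_1), f(Y_2)) \ge \max_X f(X)$, so one of $Y_1, Y_2$ must be a maximizer. In the setting preceding the lemma we have $V_1 = V_1^*$, and then Lemma~\ref{lem:clique} gives $V_2 = V_2^*$; by symmetry it therefore suffices to verify $f(Y_1) \ge f_1(A_1)$.

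First I would collect three structural facts. Every $v \in V_1 = V_1^*$ satisfies $f_1(v) = f(v) > 0$, so $V_1 \subseteq A_1$ and $f_1(V_1) = f(V_1)$. Every $v \in V \setminus V_1 \subseteq V_2^*$ satisfies $f_2(v) = f(v) > f_1(v)$, which implies $f_2(V_1) \le f_1(V_1) = f(V_1)$. Finally, $A_1 \subseteq Y_1$, because for any $v \in A_1 \setminus V_1$ the strict inequality $f_1(V_1 + v) = f(V_1) + f_1(v) > f(V_1)$ places $v$ into $Y_1$ in Line~\ref{line:exact_8}. If $Y_1 = A_1$, we are done immediately, since $f(Y_1) \ge f_1(Y_1) = f_1(A_1)$.

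The main obstacle is the remaining case, when $T \coloneqq Y_1 \setminus A_1 \ne \emptyset$. Every $v \in T$ satisfies $f_1(v) \le 0$, so the inequality $f(V_1 + v) > f(V_1)$ that admitted $v$ to $Y_1$ must be witnessed by the second representation, forcing $f_2(v) > f(V_1) - f_2(V_1) \ge 0$. Summing over the at least one element of $T$ yields $f_2(T) > f(V_1) - f_2(V_1)$. Meanwhile, the inequality $f_2(v) \ge f_1(v)$ on $A_1 \setminus V_1 \subseteq V \setminus V_1$ gives $f_2(A_1 \setminus V_1) \ge f_1(A_1) - f(V_1)$. Decomposing $Y_1$ as the disjoint union $V_1 \cup (A_1 \setminus V_1) \cup T$ and summing the three $f_2$-contributions, I expect to obtain
\[f_2(Y_1) \ge f_2(V_1) + (f_1(A_1) - f(V_1)) + f_2(T) > f_1(A_1),\]
whence $f(Y_1) \ge f_2(Y_1) > f_1(A_1)$. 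The subtle point is that an element of $T$ enters $Y_1$ only because it helps \emph{individually} when adjoined to $V_1$, whereas the argument actually needs the $f_2$-gains of all elements of $T$ to accumulate instead of cancelling; the quantitative bound $f_2(v) > f(V_1) - f_2(V_1) \ge 0$ for each $v \in T$ is precisely what guarantees this accumulation.
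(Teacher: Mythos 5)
Your proof is correct and relies on the same core facts as the paper's — that $f_1 \ge f_2$ holds on $V_1 = V_1^*$, that $f_2 \ge f_1$ holds on $V \setminus V_1 = V_2^*$, and that each $v \in Y_1 \setminus A_1$ satisfies $f_2(v) > f(V_1) - f_2(V_1) \ge 0$ — but you package them differently. The paper picks a maximizer $\Xopt = A_1$, shows $V_1 \subseteq \Xopt \subseteq Y_1$, and then argues by contradiction: a \emph{single} element $v \in Y_1 \setminus \Xopt$ would give $f(\Xopt + v) > f(\Xopt)$, contradicting optimality and hence forcing the set equality $Y_1 = \Xopt$. You instead prove the value inequality $f(Y_1) \ge f_1(A_1)$ directly, decomposing $Y_1$ into the disjoint pieces $V_1$, $A_1 \setminus V_1$, and $T = Y_1 \setminus A_1$ and summing the $f_2$-contributions. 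The direct route requires slightly more care because it must aggregate the gains over \emph{all} of $T$ at once rather than isolate a single offending element, and the per-element bound $f_2(v) > f(V_1) - f_2(V_1) \ge 0$ is exactly what ensures the gains accumulate rather than cancel. A pleasant byproduct of your version: the strict inequality $f(Y_1) > f_1(A_1)$ in the case $T \ne \emptyset$, combined with the fact that $f_1(A_1)$ is the global maximum of $f$, shows that $T \ne \emptyset$ is in fact impossible, so you implicitly recover the paper's conclusion $Y_1 = A_1$.
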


\begin{proof}
Let $\Xopt \subseteq V$ be a maximizer of $f$.
Without loss of generality, we may assume that $\Xopt = \{\, v \in V \mid f_1(v) > 0 \,\}$ by symmetry, and we show that then $Y_1 = \Xopt$.
Since $V_1 = V_1^* = \{\, v \in V \mid f_1(v) = f(v) > 0 \,\}$, we have $V_1 \subseteq \Xopt$.
In addition, since $f(V_1 + v) \geq f_1(V_1 + v) = f_1(V_1) + f_1(v) > f_1(V_1) = f(V_1)$ for each $v \in \Xopt \setminus V_1$, we have $\Xopt \subseteq Y_1 = V_1 \cup \{\, v \in V \setminus V_1 \mid f(V_1 + v) > f(V_1) \,\}$.

To show $Y_1 = \Xopt$, suppose to the contrary that there exists $v \in Y_1 \setminus \Xopt$.
Then, $v \in Y_1 \setminus V_1$ implies $f(V_1 + v) > f(V_1) = f_1(V_1)$, and $v \not\in \Xopt$ implies $f_1(v) \leq 0$ and hence $f_1(V_1) \geq f_1(V_1 + v)$.
Thus, we have $f(V_1 + v) > f_1(V_1 + v)$, which implies $f_2(V_1 + v) = f(V_1 + v) > f_1(V_1)$.
Since $\Xopt \setminus V_1 \subseteq V_2 = V_2^*$ implies $f_2(\Xopt \setminus V_1) \geq f_1(\Xopt \setminus V_1)$, we have
\[f(\Xopt + v) \geq f_2(\Xopt + v) = f_2(V_1 + v) + f_2(\Xopt \setminus V_1) > f_1(V_1) + f_1(\Xopt \setminus V_1) = f_1(\Xopt) = f(\Xopt),\]
which contradicts that $\Xopt$ is a maximizer of $f$.
\end{proof}

\subsection{\texorpdfstring{$(k-1)$}{(k-1)}-Approximation for \texorpdfstring{$k$}{k}-XOS maximization}\label{sec:(k-1)-approx}
In this section, we prove Theorem~\ref{thm:alg_bounded}.
That is, we present a deterministic $(k-1)$-approximation algorithm for maximizing $k$-XOS functions $f$ that runs in $O(k^2n)$ time for any $k \ge 2$ (even if $k$ is unknown).
In particular, when $k = 2$, it almost coincides with Algorithm~\ref{alg:exact}.
In addition, when $k = o(n)$, it achieves a better approximation ratio than Algorithm~\ref{alg:epsilon} in subcubic time.

The algorithm is shown in Algorithm~\ref{alg:approx_simplified}.
It is worth remarking that the algorithm does not use the information of the width $k$.
As with Algorithm~\ref{alg:exact}, it first computes a family of cliques $\mathcal{V}$ that covers $V$. 
As we will see below, $\mathcal{V}$ contains a $k$-maximizer (more precisely, it contains a $|\mathcal{V}|$-maximizer). 
The difficulty is to improve $k$-approximation to $(k-1)$-approximation.
To resolve this, the algorithm enumerates polynomially many candidates $\mathcal{Z}$ that would be good in addition to $\mathcal{Y}$ like in Algorithm~\ref{alg:exact}.

\begin{algorithm}[htb]
  \caption{A $(k-1)$-approximation algorithm for $k$-XOS maximization}\label{alg:approx_simplified}
  \SetKwInOut{Input}{Input}\Input{A $k$-XOS function $f$ on $V$ (with $f(v) > 0$ for all $v \in V$)}
  \SetKwInOut{Output}{Output}\Output{A $(k-1)$-maximizer $\Xalg \subseteq V$ of $f$}
  Let $R\ot \emptyset$ and $\ell\ot 0$\;
  \While{$R\ne V$}{\label{line:2}
    Let $\ell\ot \ell+1$\;
    Pick $v\in V \setminus R$ and let $V_\ell\ot\{v\}$ and $R\ot R + v$\;\label{line:4}
    \ForEach{$u\in V - v$}{
      \lIf{$f(V_\ell+u)=f(V_\ell)+f(u)$}{let $V_\ell\ot V_\ell+u$ and $R\ot R \cup \{u\}$}\label{line:6}
    }    
  }
  For each $i\in[\ell]$, let $Y_{i}\ot V_i\cup\{\, v\in V \setminus V_i \mid f(V_i+v)>f(V_i) \,\}$\;\label{line:7}
  For each $\{i, j\}\in\binom{[\ell]}{2} = \{\, J \subseteq [\ell] \mid |J| = 2 \,\}$ and $v\in V$, let $Z_{ij}^v\ot V_i\cup V_j\cup\{v\}$\;\label{line:8}
  Let $\mathcal{V} \ot \{V_1, \dots, V_\ell\}$, $\mathcal{Y} \ot \{Y_1, \dots, Y_\ell\}$, and $\mathcal{Z} \ot \{\, Z_{ij}^v \mid \{i,j\} \in \binom{[\ell]}{2},~v \in V \,\}$\;\label{line:9}
  \Return $\Xalg \in \argmax_{X\in \mathcal{V} \cup \mathcal{Y} \cup \mathcal{Z}}f(X)$\;\label{line:10}
\end{algorithm}

We first analyze the running time and then show the correctness. In what follows, let $\ell$ denote its value at the end of Algorithm~\ref{alg:approx_simplified}.

\begin{lemma}\label{lem:ell}
At the end of Algorithm~\ref{alg:approx_simplified}, each $V_i$ $(i \in [\ell])$ is a clique.
In particular, $\ell \leq k$ holds.
\end{lemma}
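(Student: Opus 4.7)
The plan is to mimic the analysis of Algorithm~\ref{alg:exact} in Lemma~\ref{lem:clique}, but carry out the argument for arbitrary width $k$ and then squeeze the bound $\ell\le k$ out of the fact that distinct iterations produce distinct cliques.

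First I would establish the basic identity that drives the entire construction: for any $X\subseteq V$ and $u\in V\setminus X$ with $I(X)\neq\emptyset$,
\[
f(X+u)=f(X)+f(u)\iff I(X)\cap I(u)\neq\emptyset,
\]
and when this holds, $I(X+u)=I(X)\cap I(u)$. The ``if'' direction follows by picking $j\in I(X)\cap I(u)$ and noting $f(X+u)\ge f_j(X)+f_j(u)=f(X)+f(u)$, combined with the general inequality $f(X+u)\le f(X)+f(u)$ which holds because for any $j'\in I(X+u)$, $f(X+u)=f_{j'}(X)+f_{j'}(u)\le f(X)+f(u)$. The ``only if'' direction and the formula for $I(X+u)$ follow by reading the equality case of this bound.

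Next I would apply this identity inductively to the inner loop. Starting from $V_\ell=\{v\}$, we have $I(V_\ell)=I(v)\neq\emptyset$. The loop adds $u$ precisely when $I(V_\ell)\cap I(u)\neq\emptyset$, and in that case updates $I(V_\ell)$ to this intersection. Hence at every point in the execution, $I(V_\ell)$ is nonempty, and at the end it equals $\bigcap_{u\in V_\ell}I(u)$. Fix any $j\in I(V_\ell)$ at termination. Then $j\in I(u)$ for every $u\in V_\ell$, which gives $V_\ell\subseteq V_j^*$. For the reverse inclusion, consider any $u\in V_j^*$. When $u$ was examined in the loop, the current set $V_\ell^{\text{cur}}$ satisfied $I(V_\ell^{\text{cur}})\supseteq I(V_\ell)\ni j$ (the intersection can only shrink as the loop proceeds), and $j\in I(u)$ as well; hence $I(V_\ell^{\text{cur}})\cap I(u)\neq\emptyset$ and $u$ was added. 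Therefore $V_\ell=V_j^*$, i.e., $V_\ell$ is a clique.

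Finally, to deduce $\ell\le k$, note that each iteration $i\in[\ell]$ yields $V_i=V_{j_i}^*$ for some $j_i\in[k]$; it suffices to show that $j_1,\dots,j_\ell$ can be chosen distinct, which is equivalent to showing $V_1,\dots,V_\ell$ are pairwise distinct. This is immediate from the outer loop structure: the seed element $v$ chosen in Line~\ref{line:4} at iteration $i$ lies in $V\setminus R$, where $R$ already contains $V_1\cup\dots\cup V_{i-1}$, so $v\in V_i\setminus(V_1\cup\dots\cup V_{i-1})$, forcing $V_i\neq V_{i'}$ for all $i'<i$. Thus $j_1,\dots,j_\ell$ are $\ell$ distinct indices in $[k]$, giving $\ell\le k$.

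The only subtle step is the reverse inclusion $V_j^*\subseteq V_\ell$, since the algorithm processes candidate elements in a fixed order and the ``representative'' index $j$ is only chosen after the loop completes; the key observation that unlocks this is monotonicity of $I(V_\ell^{\text{cur}})$ along the loop, which guarantees that the a posteriori choice $j\in I(V_\ell)$ was in fact available as a witness at every earlier step.
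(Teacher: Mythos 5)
Your proof is correct and takes essentially the same route as the paper. The paper simply remarks that the first part is proved ``in the same way as Lemma~\ref{lem:clique}'' (whose argument establishes $I(V_i)=\bigcap_{u\in V_i}I(u)$ and $I(V_i)\cap I(u')=\emptyset$ for $u'\notin V_i$, hence $V_i=V_{i'}^*$ for any $i'\in I(V_i)$), and then gets $\ell\le k$ from the $V_i$ being pairwise distinct due to the seed choice in Line~\ref{line:4}; you have spelled out the same reasoning in more detail, in particular making explicit the monotonicity of the index set $I(V_\ell^{\mathrm{cur}})$ along the inner loop, which is the key point that lets one argue about the final representative $j$ at every earlier step.
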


\begin{proof}
The first part is proved in the same way as Lemma~\ref{lem:clique}.
Since $V_i$ $(i \in [\ell])$ are pairwise distinct due to the choice of $v$ in Line~\ref{line:4} and update of $R$ in Lines~\ref{line:4} and \ref{line:6}, we conclude $\ell \le k$.
\end{proof}

The following two lemmas complete the proof of Theorem~\ref{thm:alg_bounded}.

\begin{lemma}\label{lem:time}
Algorithm~\ref{alg:approx_simplified} can be implemented to run in $O(k^2n)$ time.
\end{lemma}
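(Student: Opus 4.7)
The plan is to account for the value oracle calls phase-by-phase, since the total time complexity is measured in oracle calls (and each other elementary operation can be done in $O(1)$ amortized with standard bookkeeping).

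First, for the while loop (Lines~\ref{line:2}--\ref{line:6}), I would precompute $f(\{u\})$ for every $u \in V$ using $n$ oracle calls and cache them. In each outer iteration, the value $f(V_\ell)$ is maintained incrementally: when Line~\ref{line:6} successfully augments $V_\ell$ with $u$, the new value equals the previously computed $f(V_\ell)+f(u)$, so no additional call is needed to refresh it. Thus each outer iteration performs at most $n-1$ oracle calls (one per candidate $u$ to evaluate $f(V_\ell+u)$), and by Lemma~\ref{lem:ell} there are at most $\ell \le k$ iterations, giving $O(kn)$ calls for this phase.

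Next, Line~\ref{line:7} performs, for each $i \in [\ell]$, one oracle call per $v \in V \setminus V_i$ to test the inequality $f(V_i+v) > f(V_i)$, using the cached value of $f(V_i)$ obtained at the end of the while loop. This contributes $O(\ell n) = O(kn)$ calls. Line~\ref{line:8} is the dominant phase: for each of the $\binom{\ell}{2}$ pairs $\{i,j\}$ and each $v \in V$, we issue one oracle call to obtain $f(Z_{ij}^v)$, contributing $O(\ell^2 n) = O(k^2 n)$ calls in total. Finally, Line~\ref{line:10} takes the argmax over the cached function values of $|\mathcal{V}|+|\mathcal{Y}|+|\mathcal{Z}| = O(k^2 n)$ sets without any additional oracle calls.

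Summing across all phases yields $O(k^2 n)$ oracle calls. The only mild obstacle is organizing the caching so that $f(V_\ell)$ is kept up to date throughout the while loop without redundant queries and so that the iteration over $\binom{[\ell]}{2} \times V$ in Line~\ref{line:8} is enumerated in constant time per element; both are routine via an incremental update and a standard generator over pairs.
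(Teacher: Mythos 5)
Your proof is correct and takes essentially the same approach as the paper's: a phase-by-phase count of oracle calls, using the bound $\ell \le k$ from Lemma~\ref{lem:ell}, with incremental bookkeeping to keep the non-oracle work subordinate. The paper spells out one implementation detail you compress into ``a standard generator over pairs'' — it first materializes each $V_i \cup V_j$ in $O(k^2 n)$ time and then generates every $Z_{ij}^v$ from it in $O(1)$ per set while taking the argmax — but this is the same idea.
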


\begin{proof}
For the while-loop (Lines~\ref{line:2}--\ref{line:6}), the number of iterations is $\ell \leq k$ (Lemma~\ref{lem:ell}).
In each iteration step, the algorithm chooses an element $v \in V$ and just checks whether $f(X + u) = f(X) + f(u)$ or not for some $X \subseteq V$ once for each element $u \in V - v$.
It requires $O(n)$ time (including $O(n)$ value oracle calls), and hence $O(kn)$ time in total.

In Line~\ref{line:7}, the algorithm computes $Y_i \setminus V_i = \{\, v \in V \setminus V_i \mid f(V_i + v) > f(V_i) \,\}$ for each $i \in [\ell]$.
It takes $O(n)$ time (including $O(n)$ value oracle calls) for each $i$, and hence $O(kn)$ time in total.

In Line~\ref{line:8} (for $\mathcal{Z}$), instead of keeping all $Z_{ij}^v$ directly, we first construct $V_i \cup V_j$ $(\{i, j\} \in \binom{[\ell]}{2})$ in $O(k^2n)$ time.
Then, each $Z_{ij}^v \neq V_i \cup V_j$ can be successively constructed from $V_i \cup V_j$ in constant time when taking the maximum in Line~\ref{line:10}.

In Lines~\ref{line:9}--\ref{line:10}, the algorithm just finds a maximizer of $f$ over the family $\mathcal{V} \cup \mathcal{Y} \cup \mathcal{Z}$, whose cardinality is at most $\ell + \ell + \binom{\ell}{2}n = O(k^2n)$.

Thus the total computational time is bounded by $O(k^2n)$.
\end{proof}

\begin{lemma}\label{lem:alg_bounded}
Algorithm~\ref{alg:approx_simplified} returns a $(k-1)$-maximizer $\Xalg$ of $f$.
\end{lemma}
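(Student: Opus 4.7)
The plan is a case analysis based on whether the algorithm's family $\mathcal{V}$ contains the canonical clique of the maximizer. Fix a maximizer $\Xopt$ of $f$ and any $j^* \in I(\Xopt)$. Since $f_{j^*}$ is additive and $f(\Xopt) = f_{j^*}(\Xopt)$, the set $M := \{v \in V : f_{j^*}(v) > 0\}$ is also a maximizer, so I may assume $\Xopt = M$, which ensures $V_{j^*}^* \subseteq \Xopt$. By the same argument used in Lemma~\ref{lem:clique}, each $V_i$ produced by Algorithm~\ref{alg:approx_simplified} equals $V_{j(i)}^*$ for some $j(i) \in [k]$, and the indices $j(1), \ldots, j(\ell)$ are pairwise distinct (so $\ell \le k$).

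First, the easy case where $V_{j^*}^* \notin \{V_1, \ldots, V_\ell\}$: then $j^* \notin \{j(1), \ldots, j(\ell)\}$, and by distinctness $\ell \le k-1$. Using $V = \bigcup_i V_i$, partition $\Xopt = \bigsqcup_i A_i$ with $A_i \subseteq V_i$; each $A_i$ is a sub-clique of $V_i$, so $f(A_i) = \sum_{v \in A_i} f(v) \le f(V_i)$. By subadditivity of XOS, $f(\Xopt) \le \sum_i f(A_i) \le \sum_i f(V_i)$, hence $\max_i f(V_i) \ge f(\Xopt)/\ell \ge f(\Xopt)/(k-1)$, and some $V_i \in \mathcal{V}$ already suffices.

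In the remaining case $V_{i^*} = V_{j^*}^*$ for some $i^*$, we have $V_{i^*} \subseteq \Xopt$. Partition $\Xopt \setminus V_{i^*} = \bigsqcup_{i \neq i^*} S_i$ with $S_i \subseteq V_i$ to obtain $f(\Xopt) = f(V_{i^*}) + \sum_{i \neq i^*} f_{j^*}(S_i)$ with each summand positive. If $f(V_{i^*}) \ge f(\Xopt)/(k-1)$ I finish via $V_{i^*} \in \mathcal{V}$; otherwise pigeonhole over the $\ell - 1 \le k-1$ summands gives $i' \neq i^*$ with $f_{j^*}(S_{i'}) \ge (f(\Xopt) - f(V_{i^*}))/(k-1)$, so the ``certificate'' $V_{i^*} \cup S_{i'}$ satisfies $f(V_{i^*} \cup S_{i'}) \ge f(V_{i^*}) + f_{j^*}(S_{i'}) \ge f(\Xopt)/(k-1)$ (since $f(V_{i^*})(k-2)/(k-1) \ge 0$). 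The candidate $V_{i^*} \cup V_{i'} \in \mathcal{Z}$, realized as $Z_{i^* i'}^v$ for any $v \in V_{i^*} \cup V_{i'}$, is a natural superset of this certificate.

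The hardest step is to transfer the lower bound from $V_{i^*} \cup S_{i'}$ to its superset $V_{i^*} \cup V_{i'}$, which may include junk elements in $(V_{i'} \setminus V_{i^*}) \setminus \Xopt$ whose $f_{j^*}$-values are very negative. I would combine two complementary lower bounds on $f(V_{i^*} \cup V_{i'})$: the $f_{j^*}$-bound, which equals $f(V_{i^*}) + f_{j^*}(S_{i'}) - \delta$ for a junk penalty $\delta \ge 0$, handling the small-$\delta$ regime; and the $f_{j(i')}$-bound $f_{j(i')}(V_{i^*}) + f(V_{i'} \setminus V_{i^*})$, which exploits that $V_{i'}$ is a clique for $f_{j(i')}$, so every element of $V_{i'} \setminus V_{i^*}$ (including the junk) contributes its full positive $f$-value. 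When both bounds are marginal, $Y_{i^*} \in \mathcal{Y}$ provides a fallback since $Y_{i^*} \supseteq \Xopt$ forces $f(Y_{i^*}) \ge f_{j^*}(Y_{i^*}) = f(\Xopt) - \sigma$ for a correction $\sigma$ controllable by the structure of the junk. Balancing these inequalities so that the worst case still recovers $f(\Xopt)/(k-1)$ is the technical core of the proof.
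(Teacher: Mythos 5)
Your Case 1 ($V_{j^*}^* \notin \mathcal{V}$, hence $\ell \le k-1$) is correct and essentially matches the paper's $\ell < k$ case, with a slightly different but valid case split. The problem is Case 2, and you have correctly identified that the transfer from $f(V_{i^*}\cup S_{i'})$ to a candidate the algorithm actually examines is the core difficulty — but you have not resolved it, and the machinery you sketch (combining the $f_{j^*}$-bound, the $f_{j(i')}$-bound, and a $Y_{i^*}$ fallback with an unquantified correction $\sigma$) does not obviously close. The pigeonhole choice of $i'$ selects the piece with the largest $f_{j^*}$-mass, but this choice is uncorrelated with the ``junk'' $V_{i'}\setminus(\Xopt\cup V_{i^*})$ being small, and the junk can make both $f_{j^*}(V_{i^*}\cup V_{i'})$ and $f_{j(i')}(V_{i^*})$ arbitrarily negative simultaneously. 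So the argument as outlined has a genuine gap.

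The paper's resolution is not a pigeonhole. It first splits on whether $\Xopt = Y_p$ (trivial) or $\Xopt \subsetneq Y_p$. In the latter case it fixes a witness $v \in Y_p \setminus \Xopt$ and any $q \in I(V_p + v)$; the $Y$-construction forces the key strict inequality $f_p(V_p) = f(V_p) < f(V_p + v) = f_q(V_p + v)$ with $q \ne p$. It then decomposes $f(\Xopt) = f_p(V_p) + f_p\bigl((\Xopt\setminus V_p)\cap V_q\bigr) + f_p\bigl(\Xopt\setminus(V_p\cup V_q)\bigr)$ and charges the first two terms to $f_q(Z_{pq}^v) \le f(\Xalg)$ (using that $Z_{pq}^v$ is a candidate and that $q$ is the clique index for $V_q$), and the last term to $\sum_{i\ne p,q} f_i(V_i) \le (k-2)f(\Xalg)$, yielding $k-1$ terms total. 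The crucial ingredient your proposal lacks is this deliberate choice of $v$ and $q$, which aligns the spare capacity of $V_q$ with the element that certifies $\Xopt \ne Y_p$; an unstructured pigeonhole does not produce a comparable alignment. To repair your argument you would essentially need to replace the pigeonhole by the paper's $v,q$ selection (or prove some new structural lemma that bounds the junk).
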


\begin{proof}
Let $\Xopt \subseteq V$ be a maximizer of $f$.

If $\ell<k$, then we have
\begin{align*}
f(\Xalg) &\ge \max_{i\in[\ell]}f(V_i)
\ge \frac{1}{\ell}\cdot\sum_{i\in[\ell]}f(V_i)
= \frac{1}{\ell}\cdot\sum_{i\in[\ell]}\sum_{v \in V_i}f(v)
\ge \frac{1}{\ell}\cdot\sum_{v\in V}f(v)
\ge \frac{1}{\ell}\cdot f(\Xopt)
\ge \frac{1}{k-1}\cdot f(\Xopt),
\end{align*}
where note that $f(V_i)=\sum_{v \in V_i}f(v)$ by Lemma~\ref{lem:ell}, $f(v)>0$ for each $v\in V$ by the assumption, and $\bigcup_{i \in [\ell]} V_i = V$ due to the condition of the while-loop (Line~\ref{line:2}).
Hence, $\Xalg$ is indeed a $(k-1)$-maximizer.

By Lemma~\ref{lem:ell}, in what follows, we consider the case when $\ell=k$, and let us relabel the indices of $V_i$ $(i \in [k])$ so that $V_i = V_i^* = \{\, v \in V \mid f_i(v) = f(v) \,\}$ for each $i \in [k]$. 
Without loss of generality, suppose that $\Xopt = \{\, v \in V \mid f_p(v) > 0 \,\}$ for some $p \in [k]$ (i.e., $\max_{X\subseteq V}f(X)=f(X^*)=f_p(X^*)$).
Then, we have $V_{p} \subseteq \Xopt \subseteq Y_p$ (recall the proof of Lemma~\ref{lem:exact}).

\paragraph{Case 1:}
Suppose that $X^* = Y_p$.
We then have $f(\Xalg) \geq f(Y_p) = f(\Xopt) \geq f(\Xalg)$, and hence the output $\Xalg$ is also a maximizer of $f$.

\paragraph{Case 2:}
Suppose that $X^* \subsetneq Y_p$.
Fix any $v\in Y_p\setminus X^*$ and any $q \in I(V_p + v)$.
Since
\[f_p(V_p + v) = f_p(V_p) + f_p(v) \le f_p(V_p) = f(V_p) < f(V_p + v) = f_q(V_p + v),\]
we have $q \ne p$. 
Then, we have
\begin{align*}
f(\Xopt)
=f_{p}(\Xopt)&=f_{p}(V_p)+f_{p}((\Xopt \setminus V_p) \cap V_q)+f_{p}(\Xopt \setminus (V_p\cup V_q))\\
&< f_{q}(V_p+v)+f_{q}((\Xopt \setminus V_p) \cap V_q)+\sum_{i\in [k]\setminus\{p,q\}}f_{i}(V_i)\\
&= f_q(V_p \cup (\Xopt \cap V_q) \cup \{v\}) + \sum_{i\in [k]\setminus\{p,q\}}f_{i}(V_i)\\
&\le f_{q}(Z_{pq}^v)+\sum_{i\in [k]\setminus\{p,q\}}f_{i}(V_i) \le (k-1)\cdot f(\Xalg),
\end{align*}
where the first inequality holds since $f_p(V_p)<f_q(V_p+v)$ and
\[f_{p}(\Xopt \setminus (V_p\cup V_q))\le \sum_{u\in\Xopt \setminus (V_p\cup V_q)}f(u)\le \sum_{i\in[k]\setminus\{p,q\}}f_i(V_i).\]
Thus, $\Xalg$ is indeed a $(k-1)$-maximizer.
\end{proof}

\subsection{Finding all maximal cliques and its application}
In this section, we show another special class of $k$-XOS functions that can be maximized exactly in polynomial time.
In particular, we prove the following theorem.

\begin{theorem}\label{thm:alg_const}
There exists a deterministic algorithm to find an exact maximizer of a given $k$-XOS function $f$ with the condition
\begin{quote}
$(*)$ for every $v \in V$ and every $i \in [k]$, either $f_i(v) = f(v)$ or $f_i(v) \leq 0$
\end{quote}
in $O(n^{k+1})$ time for any $k \geq 2$ (even if $k$ is unknown).
\end{theorem}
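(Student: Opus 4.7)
The plan is to exploit condition $(*)$ to reduce the maximization problem to enumerating the maximal cliques $V_i^*$ of the representation, and then to recover each such clique from a small ``seed'' subset via the greedy extension used in Algorithm~\ref{alg:exact}. Under $(*)$, every $v \notin V_i^*$ satisfies $f_i(v) \le 0$, so $\max_{X \subseteq V} f_i(X) = f_i(V_i^*) = \sum_{v \in V_i^*} f(v)$; combining this with subadditivity of XOS functions and $f(V_i^*) \ge f_i(V_i^*)$ forces $f(V_i^*) = f_i(V_i^*)$. Consequently, for any maximizer $\Xopt$ and any $i^* \in I(\Xopt)$,
\[
f(\Xopt) = f_{i^*}(\Xopt) \le f_{i^*}(V_{i^*}^*) = f(V_{i^*}^*),
\]
so some clique attains the maximum; since $f(v) > 0$ for all $v \in V$, a clique $V_i^*$ that is not properly contained in any other clique $V_j^*$ (a ``maximal'' clique) is never worse, so we may restrict attention to maximal cliques.

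The combinatorial crux will be a \emph{signature lemma}: for each maximal clique $V_i^*$, there exists a subset $S \subseteq V_i^*$ with $|S| \le k$ such that the greedy extension from $S$ (initialize $T \gets S$, then repeatedly adjoin $u \in V \setminus T$ whenever $f(T + u) = f(T) + f(u)$) produces exactly $V_i^*$, regardless of the order in which elements are processed. I would build $S$ incrementally: start with any $v_0 \in V_i^*$, for which $i \in I(\{v_0\})$; whenever the current $I(S)$ contains an index $j$ with $V_j^* \ne V_i^*$, adjoin any element of $V_i^* \setminus V_j^*$, which is nonempty because $V_i^*$ is a maximal clique. Since there are at most $k - 1$ distinct cliques other than $V_i^*$, at most $k - 1$ such additions drive $I(S)$ into $\{\, j : V_j^* = V_i^* \,\}$. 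Thereafter $I(T) \subseteq I(S)$ holds throughout the extension, and one checks that $i$ persists in $I(T)$ and that the extension adjoins precisely the remaining elements of $V_i^*$.

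The algorithm then enumerates every $S \subseteq V$ with $|S| \le k$; verifies in one value oracle call that $f(S) = \sum_{v \in S} f(v)$, so that $S$ is contained in some clique; and if so, greedily extends $S$ in $O(n)$ further oracle calls and records the resulting set. It returns a recorded set of largest $f$-value. Correctness follows from the reduction to maximal cliques and the signature lemma, and the total running time is $\binom{n}{\le k} \cdot O(n) = O(n^{k+1})$. If $k$ is not known in advance, we iterate over the seed size $s = 1, 2, \dots$; by the signature lemma, every maximal clique has been recovered once $s$ reaches the true width $k$, and the cumulative running time through that stage is still $O(n^{k+1})$.

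The main obstacle will be the signature lemma. The greedy extension is inherently order-sensitive (as already observed for Algorithm~\ref{alg:approx_simplified}, where different processing orders can yield different cliques), so the challenge is to argue that the carefully chosen seed $S$ forces $I(T)$ to remain confined to the equivalence class of $i$ throughout the extension, thereby uniquely reproducing $V_i^*$ without any element from a competing clique being added.
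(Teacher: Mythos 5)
Your reduction to inclusion-wise maximal cliques is correct and coincides with the paper's Lemma~\ref{lem:alg_const01}, and your ``signature lemma'' is also correct as far as it goes: building $S \subseteq V_i^*$ by repeatedly adjoining an element of $V_i^* \setminus V_j^*$ for some surviving $j \in I(S)$ with $V_j^* \ne V_i^*$ does drive $I(S)$ into $\{\, j : V_j^* = V_i^* \,\}$ after at most $k-1$ steps, and from such a seed the greedy extension deterministically reproduces $V_i^*$, independent of processing order. This is a constructive alternative to what the paper does in Lemma~\ref{lem:alg_const02}: there, the argument is by contradiction --- if the output $\mathcal{V}$ with $|\mathcal{V}| = \ell$ were missing some maximal clique $V_i^*$, then choosing one $x_{i,j} \in V_i^* \setminus V_j$ per $V_j \in \mathcal{V}$ yields a seed $X_i$ of size at most $\ell$ whose greedy extension contains $X_i$ and thus differs from every $V_j$, so it would have produced a new member of $\mathcal{V}$.

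The difference bites exactly at the ``even if $k$ is unknown'' clause, and here your proposal has a genuine gap. Your signature lemma bounds the needed seed size by $k$, so if $k$ were known you could stop at seed size $k$. But when $k$ is unknown you give no termination test, and the natural one --- halt at the first $\ell$ with $|\mathcal{V}| = \ell$, as in Algorithm~\ref{alg:maximal} --- is \emph{not} justified by your signature lemma: when $\ell < k$, a missing maximal clique may only have a signature seed of size strictly larger than $\ell$, since your count of forced additions is over all competing cliques among $V_1^*, \dots, V_k^*$, not just those recorded so far. The paper's $X_i$ argument is designed precisely for this: the seed size is bounded by $|\mathcal{V}|$ itself, and the extended set need not be $V_i^*$, merely a clique not already in $\mathcal{V}$, which is all the contradiction (and hence the stopping rule) requires. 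To complete your proof for unknown $k$ you either need to fall back on that non-constructive argument to justify a stopping rule, or strengthen the signature lemma to bound the seed size by the number of cliques already recorded --- which essentially recovers the paper's argument.
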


Fix $k \geq 2$ and let $f$ be a $k$-XOS function with the condition $(*)$, i.e., for every $v \in V$ and every $i \in [k]$, either $f_i(v) = f(v) > 0$ or $f_i(v) \leq 0$.

\begin{lemma}
\label{lem:alg_const01}
If an XOS function $f$ satisfies the condition $(*)$, then there exists an inclusion-wise maximal clique that maximizes $f$.
\end{lemma}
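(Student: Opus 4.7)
The plan is to start from an arbitrary maximizer $X^*$ of $f$, first replace it by a clique $V_i^*$ that still maximizes $f$, and then enlarge this clique within the finite family of cliques until it becomes inclusion-wise maximal, preserving optimality at every step.

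First, fix any maximizer $X^*$ and any $i\in I(X^*)$. Any element $v\in X^*$ with $f_i(v)\le 0$ can be removed without decreasing $f_i$, and hence without decreasing $f$; so we may assume $f_i(v)>0$ for every $v\in X^*$. By condition $(*)$, this forces $f_i(v)=f(v)$ for every $v\in X^*$, which is exactly to say $X^*\subseteq V_i^*$.

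Next, I would argue that the clique $V_i^*$ itself maximizes $f$. By $(*)$ together with the standing assumption $f(v)>0$ for all $v\in V$, we have $f_i(v)=f(v)$ for every $v\in V_i^*$. Therefore
\[ f(V_i^*) \ge f_i(V_i^*) = \sum_{v\in V_i^*} f(v) \ge \sum_{v\in X^*} f(v) = \sum_{v\in X^*} f_i(v) = f_i(X^*) = f(X^*), \]
where the middle inequality uses $X^*\subseteq V_i^*$ together with $f(v)>0$ for every $v\in V_i^*\setminus X^*$. Hence $V_i^*$ is a clique maximizer.

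Finally, I would iterate the enlargement: if $V_i^*$ is not already inclusion-wise maximal among cliques, pick some clique $V_j^*\supsetneq V_i^*$; the same calculation gives
\[ f(V_j^*) \ge f_j(V_j^*) = \sum_{v\in V_j^*} f(v) \ge \sum_{v\in V_i^*} f(v) = f(V_i^*), \]
so $V_j^*$ is again a maximizer. Since the family $\{V_1^*, \dots, V_k^*\}$ is finite, the chain of strictly growing cliques must terminate at an inclusion-wise maximal clique, which is the required maximizer.

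The main (mild) obstacle is that without the hypothesis $(*)$ elements of $V_i^*\setminus X^*$ could carry strictly negative values under some other $f_j$, so enlarging $X^*$ to the full clique $V_i^*$ (or $V_i^*$ to a superclique $V_j^*$) might strictly decrease $f$. Condition $(*)$ is precisely what rules this out: once $f_i(v)>0$ is equivalent to $v\in V_i^*$, the only contributions along cliques are the positive quantities $f(v)$, and then monotonicity of $\sum_{v\in V_i^*} f(v)$ along the chain of cliques drives the argument.
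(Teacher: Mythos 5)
Your proof is correct and follows essentially the same idea as the paper's, namely using condition $(*)$ together with the standing assumption $f(v)>0$ to pass from a maximizer to a clique $V_i^*$ and then to argue maximality of that clique. The paper takes a slightly shorter route: it starts from an inclusion\emph{-wise minimal} maximizer $X^*$ and observes that both trimming and extension are blocked, so $X^*=\{v : f_i(v)>0\}=V_i^*$ outright, from which maximality of $V_i^*$ follows immediately; you instead start from an arbitrary maximizer, trim it, show containment $X^*\subseteq V_i^*$, verify $V_i^*$ is a maximizer, and then iterate up a chain of cliques. Your final iteration step is logically harmless but actually unnecessary: once you know $V_i^*$ is a maximizer, any strictly larger clique $V_j^*\supsetneq V_i^*$ would satisfy $f(V_j^*)=\sum_{v\in V_j^*}f(v)>\sum_{v\in V_i^*}f(v)=f(V_i^*)$ by $f(v)>0$, contradicting maximality, so $V_i^*$ is already inclusion-wise maximal and the chain never advances. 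With that streamlining your argument coincides with the paper's.
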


\begin{proof}
Let $\Xopt \subseteq V$ be an inclusion-wise minimal maximizer of $f$.
Then, for some $i \in [k]$, we have $\Xopt = \{\, v \in V \mid f_i(v) > 0 \,\} = \{\, v \in V \mid f_i(v) = f(v) \,\} = V_i^*$.
Since $\Xopt$ maximizes $f$, such a clique $V_i^*$ must be inclusion-wise maximal.
\end{proof}

By this lemma, it suffices to find all inclusion-wise maximal cliques.
This can be done by enumerating sufficiently large subsets of cliques and greedily expanding them like Algorithms~\ref{alg:exact} and \ref{alg:approx_simplified}.
The algorithm is formally shown in Algorithm~\ref{alg:maximal}, which does not use the information of the width $k$.

\begin{algorithm}[htb]
  \caption{Finding all maximal cliques}\label{alg:maximal}
  \SetKwInOut{Input}{Input}\Input{A $k$-XOS function $f$ on $V$ (with $f(v) > 0$ for all $v \in V$)}
  \SetKwInOut{Output}{Output}\Output{The family of all inclusion-wise maximal cliques $V_i^*$}
  Let $\mathcal{V}\ot\emptyset$\;
  \For{$\ell=1,2,\dots$}{
    Let $\mathcal{X}_\ell \ot \{\, X \subseteq V \mid f(X) = \sum_{v \in X}f(v),~|X| = \ell \,\}$\;
    \ForEach{$X \in \mathcal{X}_\ell$}{\label{line:2'}
      Let $V_X \ot X$\;\label{line:3'}
      \ForEach{$u\in V \setminus X$}{\label{line:4'}
        \lIf{$f(V_X + u)=f(V_X)+f(u)$}{let $V_X \ot V_X + u$}\label{line:5'}
      }
      $\mathcal{V}\ot \mathcal{V}\cup\{V_X\}$\;\label{line:7'}
    }
    \lIf{$|\mathcal{V}|=\ell$}{\Return $\mathcal{V}$}
  }
\end{algorithm}

\begin{lemma}\label{lem:alg_const02}
For any $k \geq 2$ and any $k$-XOS function $f$, Algorithm~\ref{alg:maximal} returns the family $\mathcal{V}$ of all inclusion-wise maximal cliques in $O(n^{k+1})$ time.
\end{lemma}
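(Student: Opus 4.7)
The plan is to show two things: every $V_X$ produced by the greedy expansion is an inclusion-wise maximal clique, and at the end of iteration $\ell$ the family $\mathcal{V}$ is either complete or of size at least $\ell+1$, so the termination test $|\mathcal{V}|=\ell$ fires only once $\mathcal{V}$ contains every maximal clique. Both parts rest on the identity already used in the proofs of Lemmas~\ref{lem:clique} and~\ref{lem:ell}: for any $Y \subseteq V$ and $u \in V \setminus Y$, the equality $f(Y+u)=f(Y)+f(u)$ holds if and only if $I(Y) \cap I(u) \neq \emptyset$, in which case $I(Y+u) = I(Y) \cap I(u)$. Iterating this, $\mathcal{X}_\ell$ is exactly the family of size-$\ell$ subsets of some clique $V_i^*$, and after the greedy expansion $V_X$ satisfies $I(V_X) \neq \emptyset$, $I(V_X) \subseteq I(u)$ for every $u \in V_X$, and $I(V_X) \cap I(u')=\emptyset$ for every $u' \in V \setminus V_X$; as in Lemma~\ref{lem:clique}, this forces $V_X = V_{i'}^*$ for every $i' \in I(V_X)$.

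First I would establish the maximality of $V_X$. Suppose $V_X = V_i^* \subsetneq V_j^*$ for some $j$. Then every $v \in V_X$ lies in $V_j^*$, so $f_j(v)=f(v)$ throughout $V_X$, and hence $f_j(V_X) = \sum_{v \in V_X} f(v) = f(V_X)$, giving $j \in I(V_X)$. But then $V_X = V_j^*$, contradicting $V_j^* \supsetneq V_X$. Next I would prove the key completeness statement: if some maximal clique $V^*$ is missing from $\mathcal{V}$ at the end of iteration $\ell$, then $|\mathcal{V}| \ge \ell+1$. For each $V' \in \mathcal{V}$, both $V^*$ and $V'$ are distinct maximal cliques, so $V^* \not\subseteq V'$ and I can pick $v_{V'} \in V^* \setminus V'$. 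Let $X = \{\, v_{V'} \mid V' \in \mathcal{V}\,\}$, so $X \subseteq V^*$ and $|X| \le |\mathcal{V}|$. If $|X| \le \ell$, then $X \in \mathcal{X}_{|X|}$ is processed at iteration $|X|$ and inserts $V_X \supseteq X$ into $\mathcal{V}$; applying the defining property of $v_{V'}$ to $V' = V_X$ gives $v_{V_X} \in V^* \setminus V_X$, yet by construction $v_{V_X} \in X \subseteq V_X$, a contradiction. Hence $|X| \ge \ell+1$, and so $|\mathcal{V}| \ge \ell+1$.

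Taking the contrapositive, whenever the test $|\mathcal{V}|=\ell$ succeeds, $\mathcal{V}$ already contains every maximal clique, so the output is correct. Since the number of distinct cliques $V_i^*$ is at most $k$ and $|\mathcal{V}|$ is non-decreasing in $\ell$, the test must fire by $\ell = k$ at the latest. For the running time, iteration $\ell$ builds $\mathcal{X}_\ell$ with one oracle call per size-$\ell$ subset and then runs an $O(n)$-oracle-call greedy expansion from each $X \in \mathcal{X}_\ell$, so it makes $O(\binom{n}{\ell}\cdot n) = O(n^{\ell+1})$ oracle calls; summing over $\ell = 1, \dots, k$ yields the claimed $O(n^{k+1})$ bound. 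I expect the main technical hurdle to be the hitting-set construction in the completeness step: choosing $X$ so that its greedy expansion is simultaneously forced into $\mathcal{V}$ (by $|X|\le\ell$) and distinct from every existing member of $\mathcal{V}$ (by $v_{V'} \notin V'$), and then converting the size bound on $X$ into the termination guarantee.
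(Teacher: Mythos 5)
Your proof is correct and follows essentially the same approach as the paper's: establish that each $V_X$ produced is a maximal clique via the $I(\cdot)$-intersection argument, then prove completeness by the hitting-set construction picking one witness element $v_{V'}\in V^*\setminus V'$ per $V'\in\mathcal{V}$ and deriving a contradiction from $V_X\supseteq X$, and finally count oracle calls. The only cosmetic difference is that you phrase completeness as ``missing clique implies $|\mathcal{V}|\ge\ell+1$'' and take the contrapositive, whereas the paper argues directly at termination with $\ell=|\mathcal{V}|$; the substance is identical.
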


\begin{proof}
After the first for-loop with $\ell=1$, it is obvious that $|\mathcal{V}| \ge \ell$. Since we increase $\ell$ by one in each iteration, Algorithm~\ref{alg:maximal} terminates in finite steps. 
In what follows, let $\ell$ denote its value when the algorithm terminates, i.e., $\ell = |\mathcal{V}|$.

We first confirm that each $V_X \in \mathcal{V}$ is indeed a maximal clique.
In Line~\ref{line:3'}, we have $V_X \subseteq V_i^*$ for $i \in I(V_X) = \bigcap_{v \in X}I(v)$ because  $f(X) = \sum_{v \in X}f(v)$ holds.
Hence, as with Lemma~\ref{lem:clique}, in Line~\ref{line:5'}, the condition $f(V_X + u) = f(V_X) + f(u)$ holds if and only if $I(V_X) \cap I(u) \neq \emptyset$, and then $I(V_X + u) = I(V_X) \cap I(u)$.
Thus, after the innermost for-loop (Lines~\ref{line:4'}--\ref{line:5'}), $I(V_X) \subseteq I(u)$ for each $u \in V_X$ and $I(V_X) \cap I(u') = \emptyset$ for each $u' \in V \setminus V_X$.
This means that $V_X = V_i^*$ for each $i \in I(V_X)$, i.e., $V_X$ is a clique. Furthermore, $V_X$ is a maximal clique because $V_X \setminus V_j^* \neq \emptyset$ for each $j \in [k] \setminus I(V_X)$ by the definition of $I(V_X)$.

To show that the output contains all the maximal cliques, suppose to the contrary that some maximal clique $V_i^*$ is not contained in the output $\mathcal{V} = \{V_1, \dots, V_\ell\}$.
Since each $V_j \in \mathcal{V}$ is a clique as shown above, we have $V_i^* \setminus V_j \neq \emptyset$.
Fix any $x_{i,j} \in V_i^* \setminus V_j$ for each $j \in [\ell]$, and let $X_i \coloneqq \{\, x_{i,j} \mid j \in [\ell] \,\}$.
Since $X_i \subseteq V_i^*$, we have $f(X_i) = f_i(X_i) = \sum_{v \in X_i}f_i(v) = \sum_{v \in X_i}f(v)$. 
This shows that $X_i \in \mathcal{X}_{\ell'}$ for some $\ell' \in [\ell]$, because $|X_i| \le \ell$.
Then, we have $V_{X_i} \in \mathcal{V}$.
However, $V_{X_i} \neq V_j$ for each $j \in [\ell]$ because $x_{i,j} \in X_i \setminus V_j \subseteq V_{X_i} \setminus V_j$, which is a contradiction. 

Finally, we analyze the computational time.
The algorithm requires $O(n)$ time to check whether $f(X) = \sum_{v \in X} f(v)$ or not for each $X \subseteq V$ with $|X| \leq \ell = |\mathcal{V}| \leq k$ (recall that every $V_X \in \mathcal{V}$ is a clique $V_i^*$ for some $i \in [k]$), and $O(n)$ time (including $O(n)$ value oracle calls) in Lines~\ref{line:3'}--\ref{line:7'} for each $X \in \mathcal{X}_{\ell'}$ $(\ell' \in [\ell])$.
The number of candidates for $X$ is $\sum_{i = 1}^\ell\binom{n}{i} = O(n^k)$, and hence the total computational time is bounded by $O(n^{k+1})$.
\end{proof}

By Lemmas~\ref{lem:alg_const01} and~\ref{lem:alg_const02}, we obtain Theorem~\ref{thm:alg_const}.
\section{Hardness}\label{sec:hardness}
In this section, we prove two hardness results on XOS maximization (Theorems~\ref{thm:hard_unbounded} and \ref{thm:hard_bounded}), which claim that an exponential (or super-polynomial) number of value oracle calls are required to beat the approximation ratios of Algorithms~\ref{alg:epsilon}, \ref{alg:randomized_general}, and \ref{alg:approx_simplified}.
All hardness results are based on a probabilistic argument.
A key tool is the following lemma.

\begin{lemma}\label{lem:exponential}
Let $\hV=[\hn]$ and let $s,t$ be integers such that $1\le t\le s\le \hn$.
Suppose that we pick, uniformly at random, a set $S\subseteq \hV$ such that $|S|=s$.
Let $\hf\colon 2^\hV\to\mathbb{R}$ be the function defined as
\begin{align*}
    \hf(X)=\begin{cases}
    1&(\text{if } X\subseteq S\text{ and }|X|\ge t),\\
    0&(\text{otherwise}).\end{cases}
\end{align*}
Then, for any positive real $\delta~(<1)$,
any algorithm (including a randomized one) to find $X \subseteq \hV$ with $\hf(X)=1$ calls the value oracle at least $\delta \cdot (\hn/s)^t$ times with probability at least $1 - \delta$.
\end{lemma}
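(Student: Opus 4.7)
The plan is to combine Yao's minimax principle with a direct union bound along the ``all-zero'' branch of a deterministic decision tree. Since any randomized algorithm is a distribution over deterministic ones (indexed by its internal coins), an averaging argument reduces the statement to proving the same query-count lower bound for every deterministic algorithm $\mathcal{A}$. We may also assume without loss of generality that $\mathcal{A}$ queries its own output (at most one extra call), so that any successful run sees the response $1$ on at least one query.

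Two observations will drive the count. First, a single-query success is rare: if $|X|<t$ then $\hf(X)=0$ deterministically, while if $|X|\ge t$,
\[
\Pr_S[\hf(X)=1]=\Pr_S[X\subseteq S]=\prod_{j=0}^{|X|-1}\frac{s-j}{\hn-j}\le (s/\hn)^{|X|}\le (s/\hn)^t.
\]
Second, the ``all-zero'' branch of $\mathcal{A}$'s decision tree, obtained by answering every query with $0$, is a fixed sequence $X_1,X_2,\dots$ depending only on $\mathcal{A}$. Because the first query receiving the answer $1$ must be preceded by only negative answers, that query necessarily coincides with some $X_j$ along this branch.

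Combining the two facts, for any integer $M\ge 0$,
\[
\Pr_S\bigl[\mathcal{A}\text{ outputs a witness using at most }M\text{ queries}\bigr]\le \sum_{j=1}^M\Pr_S[\hf(X_j)=1]\le M\cdot (s/\hn)^t.
\]
Setting $M\coloneqq\lfloor \delta\cdot (\hn/s)^t\rfloor$ makes the right-hand side at most $\delta$, so a correct $\mathcal{A}$ makes strictly more than $M$---hence at least $\delta\cdot(\hn/s)^t$---queries with probability at least $1-\delta$.

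The main subtlety will be verifying the all-zero-branch observation, namely that the sequence of queries preceding the first affirmative response is determined by $\mathcal{A}$ alone and not by $S$; once this is in place, the union bound is immediate and the remaining estimates are routine. No delicate construction or concentration inequality is needed---everything reduces to the elementary per-query bound $(s/\hn)^t$.
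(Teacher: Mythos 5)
Your proof is correct and takes essentially the same route as the paper's: Yao's principle reduces to a deterministic algorithm, and a union bound over at most $M$ fixed query sets, each hitting $\hf=1$ with probability at most $(s/\hn)^t$, yields the lower bound. The only differences are presentational: you make explicit the ``all-zero branch'' observation (that the queries preceding the first affirmative answer form a sequence determined by $\mathcal{A}$ alone, not by $S$) and the WLOG that $\mathcal{A}$ queries its output, whereas the paper states the same union bound more tersely by counting the sets $\mathcal{X}_i=\{X \mid X_i\subseteq X\subseteq\hV,\ |X|=s\}$ covered after $m$ queries. Both compute the per-query bound identically (a falling-factorial ratio bounded by $(s/\hn)^t$), so your write-up is a slightly more carefully articulated version of the same argument.
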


\begin{proof}
Suppose to the contrary that there exists an algorithm to find $X \subseteq \hV$ with $\hf(X) = 1$ that calls the value oracle less than $\delta\cdot(\hn/s)^t$ times with probability more than $\delta$.
By Yao's principle, we may assume that it is deterministic, and suppose that it calls the value oracle for $X_1,X_2,\ldots\subseteq \hV$ in this order.
Note that $f(X_i)=1$ if and only if $S\in \mathcal{X}_i\coloneqq\{\, X\mid X_i\subseteq X\subseteq \hV,~|X|=s \,\}$ and $|X_i|\ge t$.
Since $|\mathcal{X}_i|\le \binom{\hn-t}{s-t}$ holds for any $X_i \subseteq \hV$ with $|X_i| \ge t$, the probability that the algorithm finds $X \subseteq \hV$ with $\hf(X) = 1$ before $m$ oracle calls is at most
\begin{align*}
\frac{\left|\bigcup_{i=1}^m\mathcal{X}_i\right|}{\binom{\hn}{s}}
\le \frac{m\cdot \binom{\hn-t}{s-t}}{\binom{\hn}{s}}
=m\cdot \frac{s}{\hn}\cdot \frac{s-1}{\hn-1}\cdot\dots\cdot \frac{s-t+1}{\hn-t+1}
\le m\cdot \left(\frac{s}{\hn}\right)^t,
\end{align*}
which contradicts that the probability is larger than $\delta$ when $m = \delta\cdot(\hn/s)^t$.
\end{proof}

\subsection{Inapproximability within \texorpdfstring{$n^{1-\epsilon}$}{n-epsilon} and \texorpdfstring{$o(n / \log n)$}{o-n-log-n} for general XOS maximization}
In this section, we prove Theorem~\ref{thm:hard_unbounded}, which is restated for the sake of convenience as follows.

\let\tmp\thetheorem
\renewcommand{\thetheorem}{\getrefnumber{thm:hard_unbounded}}
\begin{theorem}
For any $\epsilon > 0$, there exist a constant $c > 0$ and a distribution of instances of XOS maximization such that any randomized $n^{1 - \epsilon}$-approximation algorithm calls the value oracle at least $\delta \cdot 2^{\Omega(n^c)}$ times with probability at least $1 - \delta$ for any $0 < \delta < 1$.
Moreover, there exists a distribution of instances of XOS maximization such that any randomized $o({n}/{\log n})$-approximation algorithm calls the value oracle at least $\delta \cdot n^{\omega(1)}$ times with probability at least $1 - \delta$ for any $0 < \delta < 1$.
\end{theorem}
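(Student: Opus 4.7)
The plan is to reduce both parts to the oracle-hardness lemma (Lemma~\ref{lem:exponential}) via one common XOS construction that plants a large random ``clique'' against a flat background. Fix parameters $1 \leq t \leq s \leq n$ to be chosen below, pick $S \subseteq V$ uniformly at random of size $s$, and let $f \colon 2^V \to \mathbb{R}$ be the pointwise maximum of the $n+1$ additive functions: for each $v \in V$, the function $f_v$ giving value $t-1$ to $v$ and $0$ elsewhere, and one additional function $f_S$ giving value $1$ to each $u \in S$ and value $-M$ (say $M = s$) to each $u \notin S$. A direct computation yields
\[
f(X) = \begin{cases} 0 & \text{if } X = \emptyset, \\ |X| & \text{if } X \subseteq S \text{ and } |X| \geq t, \\ t-1 & \text{otherwise.}\end{cases}
\]
In particular $\mathrm{OPT} = f(S) = s$.

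The key step is to observe that the value oracle of $f$ provides essentially no information beyond the oracle of $\hf$ in Lemma~\ref{lem:exponential} with the same $s,t$: the value $f(X)$ depends only on whether $X = \emptyset$ (trivially known), whether $X \subseteq S$ and $|X| \geq t$ (so $\hf(X) = 1$), or neither (so $\hf(X) = 0$). Moreover, a $\rho$-maximizer of $f$ with $\rho \leq s/t$ must satisfy $f(X) \geq s/\rho \geq t > t-1$, which forces $X \subseteq S$ and $|X| \geq t$, i.e., $\hf(X) = 1$. Hence any $\rho$-approximation algorithm for $f$ can be converted, with essentially no additional queries, into an algorithm that finds $X$ with $\hf(X) = 1$; Lemma~\ref{lem:exponential} then gives the desired lower bound $\delta \cdot (n/s)^t$ with probability at least $1 - \delta$.

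It remains to choose parameters. For the first claim, given $\epsilon > 0$ and $\rho = n^{1-\epsilon}$, set $s = \lceil n^{1-\epsilon/2} \rceil$ and $t = \lceil n^{\epsilon/2} \rceil$ so that $s/t \geq \rho$; then $(n/s)^t \geq (n^{\epsilon/2}/2)^{n^{\epsilon/2}} = 2^{\Omega((\log n)\cdot n^{\epsilon/2})} = 2^{\Omega(n^c)}$ for $c = \epsilon/3$. For the second claim, given any $\rho(n) = o(n/\log n)$, set $s = \lfloor n/2 \rfloor$ and $t = \lceil s/\rho \rceil$; then $n/\rho = \omega(\log n)$ implies $t = \omega(\log n)$, and so $(n/s)^t \geq 2^t = 2^{\omega(\log n)} = n^{\omega(1)}$. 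The main obstacle is the equivalence-of-oracles argument: one must calibrate the background value $t-1$ arising from the $f_v$'s to match the threshold $t$ at which $f_S$ starts to exceed it, so that every query $X$ with $X \not\subseteq S$ or $|X| < t$ returns the same value $t-1$ and hence reveals nothing about $S$ to the algorithm until it has essentially found a $\rho$-maximizer.
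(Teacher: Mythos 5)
Your construction --- $n$ additive singleton functions providing a flat background, plus one planted additive function supported on a uniformly random set $S$, then reducing to Lemma~\ref{lem:exponential} --- is exactly the paper's approach; the paper fixes $s = n/2$ and $t \approx n^{\epsilon/2}/2$ (resp.\ $t = \omega(\log n)$), whereas you keep $s,t$ symbolic a little longer, but the argument is the same. One small fix: in the second part take $t = \lfloor s/\rho \rfloor$ rather than $\lceil s/\rho \rceil$ so that $s/\rho \ge t$ (the inequality you actually invoke) holds; this still gives $t = \omega(\log n)$, and similarly the first part's rounding should be nudged so that $s/\rho \ge t$ holds, e.g.\ by taking $s$ a constant factor larger.
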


\let\thetheorem\tmp
\addtocounter{theorem}{-1}

\begin{proof}
For the first part, let $\epsilon' \coloneqq \epsilon / 2$, and $V = [n]$.
We pick, uniformly at random, a set $S\subseteq V$ such that $|S|=n/2$ (where we assume that $n$ is even).
Suppose that $f(X)=\max \{\, f_i(X) \mid i \in [n + 1] \,\}$, where
\begin{align*}
    f_i(v)&=
    \begin{cases}
    n^{\epsilon'} / 2 &(\text{if }v=i),\\
    0&(\text{if }v\ne i),
    \end{cases}~\text{for}~i\in [n],~~\text{and}~~
    f_{n+1}(v)=
    \begin{cases}
    1&(\text{if }v\in S),\\
    -n&(\text{if }v\not\in S).
    \end{cases}
\end{align*}
We then have $\max_{X\subseteq V}f(X)=f_{n+1}(S)=n/2$ and
\begin{align*}
  f(X)=
  \begin{cases}
    |X| &(\text{if } X\subseteq S\text{ and }|X| > n^{\epsilon'}/2),\\
    0&(\text{if } X=\emptyset),\\
    n^{\epsilon'} / 2&(\text{otherwise}).
  \end{cases}
\end{align*}
Hence, by Lemma~\ref{lem:exponential} (with $\hn = n$, $s = n/2$, and $t = n^{\epsilon'} / 2$), any algorithm to obtain an $n^{1-\epsilon}$-maximizer of $f$ calls the value oracle at least $\delta \cdot 2^{n^{\epsilon'} / 2}$ times with probability at least $1 - \delta$.

For the second part, if we replace $n^{\epsilon'} / 2$ in the definition of $f_i$ with $\tau = \omega(\log n)$, then we derive from Lemma~\ref{lem:exponential} (with $\hn = n$, $s = n/2$, and $t = \tau$) that any algorithm to obtain an $o({n}/{\log n})$-maximizer of $f$ calls the value oracle at least $\delta \cdot n^{\omega(1)}$ times with probability at least $1 - \delta$.
\end{proof}

This theorem shows that an exponential or super-polynomial number of oracle calls are required with high probability, which implies the following corollary. 

\begin{corollary}\label{cor:hard_unbounded}
Let $\epsilon>0$. 
Then, for any randomized $n^{1 - \epsilon}$-approximation (resp.~$o({n}/{\log n})$-approximation) algorithm for XOS maximization, the expected number of value oracle calls is exponential (resp.~super-polynomial) in the ground set size $n$ for the worst instance.
\end{corollary}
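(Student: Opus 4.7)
The plan is to derive Corollary~\ref{cor:hard_unbounded} directly from Theorem~\ref{thm:hard_unbounded} via the standard conversion from a high-probability lower bound into an expectation lower bound. The key observation is that any nonnegative random variable $T$ with $\Pr[T \geq a] \geq p$ satisfies $E[T] \geq p \cdot a$, so fixing any constant $\delta \in (0,1)$ in the theorem immediately yields an expectation bound.

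Concretely, I would fix $\delta = 1/2$ and invoke Theorem~\ref{thm:hard_unbounded}. For the $n^{1-\epsilon}$-approximation case, this produces a distribution $\mathcal{D}$ of instances such that, jointly over $I \sim \mathcal{D}$ and the algorithm's internal randomness, the total number of oracle calls $T$ satisfies $\Pr[\,T \geq \tfrac{1}{2} \cdot 2^{\Omega(n^c)}\,] \geq \tfrac{1}{2}$. Applying the same reasoning to the second part of the theorem yields $\Pr[\,T \geq \tfrac{1}{2} \cdot n^{\omega(1)}\,] \geq \tfrac{1}{2}$ for the $o(n/\log n)$ case.

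From these high-probability bounds, the elementary inequality above gives
\[
E[T] \;\geq\; \tfrac{1}{2} \cdot \tfrac{1}{2} \cdot 2^{\Omega(n^c)} \;=\; 2^{\Omega(n^c)}/4
\quad\text{(resp.\ $n^{\omega(1)}/4$),}
\]
where the expectation is over both $I \sim \mathcal{D}$ and the algorithm's randomness. To pass from this averaged bound to the worst-case statement required by the corollary, I would use the trivial inequality $\sup_I E_r[T(I,r)] \geq E_{I \sim \mathcal{D}}[E_r[T(I,r)]] = E[T]$: since some instance in the support of $\mathcal{D}$ must incur at least the average expected number of calls, the worst-case expected query complexity is also at least $2^{\Omega(n^c)}/4$ (resp.\ $n^{\omega(1)}/4$), which is still exponential (resp.\ super-polynomial) in $n$.

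There is no real obstacle; the whole argument is a single Markov-style averaging, and the only bookkeeping is to be explicit that the worst-case expectation dominates the expectation under any fixed distribution of instances, so the exponential/super-polynomial lower bound transfers from the random-instance model of Theorem~\ref{thm:hard_unbounded} to the worst-case model asserted by the corollary.
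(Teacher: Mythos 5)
Your proposal is correct and follows exactly the argument the paper implicitly invokes: the paper simply states that the high-probability bound of Theorem~\ref{thm:hard_unbounded} ``implies'' the corollary, and your two-step expansion (reverse Markov with $\delta = 1/2$, then $\sup_I \mathbb{E}_r[T] \geq \mathbb{E}_{I\sim\mathcal{D}}\mathbb{E}_r[T]$ to pass from distributional to worst-case) is precisely the standard reasoning being left tacit. Nothing to add.
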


\begin{remark}
The hardness result holds also for the problem of maximizing a function that is represented by the maximum of an additive function and a constant, i.e., $f(X)=\max\{g(X), a\}$ $(\forall X \subseteq V)$ for an additive function $g\colon 2^V\to\mathbb{R}$ and a constant $a \in \mathbb{R}$.\footnote{In contrast, for such a function, one can find an $(\epsilon n)$-maximizer and an $(\epsilon n / \log n)$-maximizer in expectation by the same algorithms as Algorithms~\ref{alg:epsilon} and \ref{alg:randomized_general}, respectively.}
To see this, let $\epsilon'$ and $S$ be as in the proof of Theorem~\ref{thm:hard_unbounded} and consider the function $f$ defined as $f(X)=\max\{g(X),n^{\epsilon'}/2\}$, where $g$ is an additive function such that $g(v)=1$ if $v\in S$ and $g(v)=-n$ if $v\not\in S$.
Then, by the same argument, we see that an exponential number of oracle calls are required with high probability to find an $n^{1-\epsilon}$-maximizer. Similarly, a super-polynomial number of oracle calls are required with high probability to find an $o({n}/{\log n})$-maximizer. 
\end{remark}

\subsection{Inapproximability within \texorpdfstring{$k-1-\epsilon$}{k-1-epsilon} for \texorpdfstring{$k$}{k}-XOS maximization}
In this section, we prove Theorem~\ref{thm:hard_bounded}, which we restate here. 

\renewcommand{\thetheorem}{\getrefnumber{thm:hard_bounded}}
\begin{theorem}
For any $k \geq 3$ and any $\epsilon > 0$, there exist a constant $c > 0$ and a distribution of instances of $k$-XOS maximization such that any randomized $(k - 1 - \epsilon)$-approximation algorithm calls the value oracle at least $\delta \cdot 2^{\Omega(n^c)}$ times with probability at least $1 - \delta$ for any $0 < \delta < 1$.
\end{theorem}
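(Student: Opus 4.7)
The plan is to adapt the probabilistic argument from Theorem~\ref{thm:hard_unbounded} to the $k$-XOS setting by constructing a distribution over $k$-XOS instances in which a hidden random partition of $V$ into $k-1$ parts $S_1, \ldots, S_{k-1}$ (each of size $s = n/(k-1)$) must be identified in order to beat $(k-1-\epsilon)$-approximation. The $k$-XOS function $f = \max(f_1, \ldots, f_k)$ will be defined with $f_i$ ($i \in [k-1]$) additive, taking value $\alpha$ on $S_i$ and a very negative value $-T$ outside, while $f_k(v) = c$ is a constant baseline. The parameters $\alpha$ and $c$ will be tuned so that $\OPT = f(S_i) = \alpha s$ while $f(V) = cn$ equals $\alpha s / (k-1-\epsilon/2)$, giving a gap strictly larger than $k-1-\epsilon$.

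The key observation is that, since $f_i(\tilde{X})$ is very negative unless $\tilde{X} \subseteq S_i$ and $f_k(\tilde{X}) \leq cn$, any $\tilde{X}$ achieving $f(\tilde{X}) \geq \OPT/(k-1-\epsilon)$ must satisfy $\tilde{X} \subseteq S_i$ for some $i$ and $|\tilde{X}| \geq t := \Theta(n/(k-1-\epsilon))$. Applying Lemma~\ref{lem:exponential} with $\hn = n$ and a union bound over $i \in [k-1]$, the probability that any single query is such a ``good'' $\tilde{X}$ is at most $(k-1) \cdot (s/n)^t = 2^{-\Omega(n)}$; following the Yao's-principle argument used in the lemma's proof, this yields a lower bound of $\delta \cdot 2^{\Omega(n)}$ on the number of value-oracle calls for any randomized $(k-1-\epsilon)$-approximation algorithm, proving the theorem with $c = 1$ in the $2^{\Omega(n^c)}$ bound. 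A variant with $S_i \subseteq V$ of size $s = n^{c'} < n/(k-1)$ together with padding elements can also be used if a smaller $c$ suffices.

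The main obstacle is that, in the sketched construction, pair queries distinguish same-$S_i$ pairs (yielding $f = 2\alpha$) from different-$S_i$ pairs (yielding $f = 2c$), which in principle would let a polynomial-time algorithm reconstruct the entire partition from the co-membership graph and thus locate $S_i$ trivially. Overcoming this requires a refinement of the construction---for instance, overlapping the $S_i$'s, embedding them in a much larger padded ground set, or introducing a carefully chosen non-uniform component to $f_k$ that equalizes the pair-level responses without collapsing the OPT--baseline gap at larger sets. Achieving this balance is delicate, since any boost to $f_k$ that matches $f_i$ on pairs tends to also match or exceed $f_i$ on $S_i$ itself and erase the gap; the rigorous construction therefore likely calls either for an elaborate joint design of the $f_i$'s and $f_k$ or for a direct adversarial analysis showing that knowledge of the co-membership structure still leaves exponentially many ``compatible'' partitions, so that pinpointing a specific $S_i$ requires $\delta \cdot 2^{\Omega(n^c)}$ queries.
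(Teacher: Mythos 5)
Your proposed construction has a gap that you yourself identify but do not close, and it is precisely at the point where the paper's construction does something genuinely different. In your setup, with $V$ partitioned into equal-sized $S_1,\dots,S_{k-1}$ and $f_i$ equal to $\alpha$ on $S_i$ and $-T$ off it, a query on a pair $\{u,v\}$ returns $2\alpha$ when $u,v$ lie in the same part and $2c$ otherwise, so $O(n)$ queries reconstruct the entire partition. The suggested patches (overlapping parts, padding, a non-uniform $f_k$ tuned to equalize pair responses, or a ``direct adversarial analysis'') are not worked out, and as you note, naively boosting $f_k$ on pairs tends to wipe out the OPT gap on larger sets. The argument therefore does not establish the theorem.

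The paper resolves exactly this difficulty by abandoning the equal-size partition. It takes $V = V_1 \cup \dots \cup V_{k-1}$ with geometrically growing blocks $|V_i| = \tn^i$ (so $n = \Theta(\tn^{k-1})$) and hides a random $S_i \subseteq V_i$ of density $1-\gamma$ inside each block. Each $f_i$ ($i\in[k-1]$) is the indicator of the whole block $V_i$ scaled by $\tn^{k-i}$, so that $\max_X f_i(X) = \tn^k$ for every $i$, and $f_k$ pays $(1-\gamma)\tn^{k-i}$ on $S_i$ and a huge penalty off $\bigcup_i S_i$; $\OPT = f_k(\bigcup_i S_i) = (k-1)(1-\gamma)^2\tn^k$. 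Because the $f_i$'s do not depend on the $S_i$'s and the per-element scale decays geometrically across blocks, small queries (singletons, pairs, and indeed any set not containing $\Omega(\tn)$ elements of some $S_i$ with $i\ge 2$) are dominated by some $f_j$ and reveal nothing about the hidden sets — this is the content of the paper's Claim. Lemma~\ref{lem:exponential} then gives a $2^{\Omega(\tn)} = 2^{\Omega(n^{1/(k-1)})}$ lower bound (so $c = 1/(k-1)$, not $c = 1$). The geometric block sizes and the ``hide $S_i$ strictly inside a larger clique $V_i$'' trick are the missing ideas that make the information-theoretic argument actually go through.
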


\let\thetheorem\tmp
\addtocounter{theorem}{-1}

\begin{proof}
Let $\tn$ be a sufficiently large integer and $\gamma$ be a sufficiently small positive rational number, such that $(k-1) (1-\gamma)^2 > k-1-\epsilon$ holds and $\gamma\tn$ is an integer.
Suppose that $V$ is the union of $k-1$ disjoint sets $V_1, V_2, \dots , V_{k-1}$ with $|V_i|=\tn^{i}$ for each $i \in [k-1]$.
Then, $n = \Theta(\tn^{k-1})$ as well as $\tn = \Theta(n^{1/(k-1)})$. 	
For each $i \in [k-1]$, we pick, uniformly at random, a set $S_i \subseteq V_i$ such that $|S_i|=(1-\gamma)|V_i| = (1 - \gamma)\tn^i$.

Suppose that $f(X)=\max_{i \in [k]} f_i(X)$, where 
\begin{align*}
    f_i(v)&=
    \begin{cases}
    \tn^{k-i} &(\text{if }v\in V_i),\\
    0&(\text{otherwise}),
    \end{cases}\quad \text{for } i\in [k-1],
    \text{ and}\\[2mm]
    f_k(v)&=
    \begin{cases}
    (1-\gamma)\tn^{k-i}&(\text{if }v\in S_i \text{ for some }i \in [k-1]),\\
    -\tn^{k+1}&(\text{otherwise}).
    \end{cases}
\end{align*}
Then, we have $\max_{X\subseteq V}f(X)=f_k(\bigcup_{i=1}^{k-1} S_i) = (k-1)  (1-\gamma)^2\tn^{k} > (k-1-\epsilon) \tn^{k}$.

\begin{claim}
If a nonempty subset $X \subseteq V$ satisfies that $f(X) = f_k(X)$, then there exists $i \in \{2, 3,\dots , k-1\}$ such that $X \cap V_i \subseteq S_i$ and $|X \cap V_i| \ge \frac{\gamma}{k-2} \cdot \tn$. 
\end{claim}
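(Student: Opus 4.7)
The plan is to extract structural information about $X$ from the hypothesis $f(X) = f_k(X)$ by combining two features of the construction: the huge negative penalty $-\tn^{k+1}$ that $f_k$ charges for an element outside $\bigcup_i S_i$, and the XOS inequalities $f_k(X) \ge f_j(X)$ for $j \in [k-1]$, which carry the geometrically increasing weights $\tn^{k-j}$.

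First I would rule out that $X$ touches $V_i \setminus S_i$ for any $i$. Since $X$ is nonempty, some $f_j(X) = |X \cap V_j|\,\tn^{k-j}$ is strictly positive, so $f_k(X) = f(X) > 0$. On the other hand, a single element of $V_i \setminus S_i$ contributes $-\tn^{k+1}$ to $f_k(X)$, which dominates the at most $k\tn^k$ of positive contributions available; this would force $f_k(X) < 0$, a contradiction. Hence $X \cap V_i \subseteq S_i$ for every $i \in [k-1]$, which already gives the first half of the claim.

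Writing $a_i \coloneqq |X \cap V_i|$, this confinement yields $f_k(X) = (1-\gamma) \sum_{i=1}^{k-1} a_i\,\tn^{k-i}$ and $f_j(X) = a_j\,\tn^{k-j}$, so rearranging $f_k(X) \ge f_j(X)$ produces
\[
(1-\gamma) \sum_{i \ne j} a_i\,\tn^{k-i} \;\ge\; \gamma\, a_j\,\tn^{k-j} \qquad (\forall\, j \in [k-1]).
\]
I would then split into cases. If $a_1 \ge 1$, I apply this with $j = 1$ to get $\sum_{i=2}^{k-1} a_i\,\tn^{k-i} \ge \frac{\gamma}{1-\gamma}\,\tn^{k-1}$, and pigeonhole over the $k-2$ summands yields some $i \in \{2, \dots, k-1\}$ with $a_i \ge \frac{\gamma}{(k-2)(1-\gamma)}\,\tn^{i-1}$; since $\tn^{i-1} \ge \tn$ and $1-\gamma < 1$, this gives $a_i \ge \frac{\gamma}{k-2}\,\tn$. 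If instead $a_1 = 0$, let $j^* \ge 2$ be the smallest index with $a_{j^*} > 0$. Applying the inequality with $j = j^*$ kills every left-hand term with $i \le j^*$, forcing $\sum_{i > j^*} a_i\,\tn^{k-i} \ge \frac{\gamma}{1-\gamma}\,\tn^{k-j^*}$; in particular this sum is nonempty, so $j^* \le k-2$. Pigeonhole over the $k-1-j^*$ terms, together with $i - j^* \ge 1$, produces some $i > j^*$ with $a_i \ge \frac{\gamma\,\tn}{(k-1-j^*)(1-\gamma)}$, and the elementary bound $(k-1-j^*)(1-\gamma) \le k-2$ (which holds since $j^* \ge 2$ implies $k-1-j^* \le k-3$) concludes $a_i \ge \frac{\gamma}{k-2}\,\tn$.

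The main subtlety is the case $a_1 = 0$: the natural choice $j = 1$ in the XOS constraint becomes vacuous, and one has to re-anchor on the smallest occupied level $j^*$. The constants then require careful checking: the pigeonhole denominator mutates from $(k-2)(1-\gamma)$ to $(k-1-j^*)(1-\gamma)$, and the target $\frac{\gamma}{k-2}\,\tn$ is exactly tight enough to survive this mutation provided $j^* \ge 2$ (for $k = 3$ this case cannot arise, since $j^*$ would have to lie in the empty range $\{2, \dots, k-2\}$).
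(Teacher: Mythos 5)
Your proof is correct and follows essentially the same route as the paper: confine $X$ to $\bigcup_i S_i$ via the $-\tn^{k+1}$ penalty, rearrange $f_k(X)\ge f_j(X)$ into $(1-\gamma)\sum_{i\ne j}a_i\tn^{k-i}\ge\gamma a_j\tn^{k-j}$, and pigeonhole over the remaining levels using $\tn^{k-i}\le\tn^{k-j-1}$ for $i>j$. The only difference is cosmetic: the paper avoids your case split on $a_1\ge 1$ versus $a_1=0$ by anchoring directly on the minimum index $j$ with $X\cap V_j\ne\emptyset$ from the start, getting $\sum_{i>j}|X\cap V_i|\ge\gamma\tn$ and then $|X\cap V_i|\ge\gamma\tn/(k-1-j)\ge\gamma\tn/(k-2)$ for some $i>j$ in one stroke; your two cases are exactly the subcases $j=1$ and $j\ge 2$ of that unified argument.
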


\begin{proof}
Assume that $f(X) = f_k(X)$. Then, it is clear that $X \cap V_i \subseteq S_i$ for each $i \in [k - 1]$. 
Let $j \in [k-1]$ be the minimum index such that $X \cap V_j \neq \emptyset$. 
Since $f_j(X) = \tn^{k-j} |X \cap V_j|$ and $f_k(X) = \sum_{i \ge j} (1-\gamma) \tn^{k-i} |X \cap V_i|$, we derive from $f_k(X) = f(X) \geq f_j(X)$ that
\[\sum_{i > j} (1-\gamma) \tn^{k-i} |X \cap V_i| \ge \gamma \tn^{k-j} |X \cap V_j|.\] 
Since $|X \cap V_j| / (1-\gamma) \ge 1$, this shows that $\sum_{i > j} |X \cap V_i| \ge \gamma \tn$, which implies that $|X \cap V_i| \ge \frac{\gamma}{k-2} \cdot \tn$ for some $i \in \{j+1, j+2, \dots , k-1\}$.  
\end{proof}

This claim shows that we cannot obtain a nonempty subset $X\subseteq V$ with $f(X) = f_k(X)$ unless we find a subset of $S_i$ of size $\frac{\gamma}{k-2} \cdot \tn$ for some $i$.
For fixed $i\in \{2,3,\dots , k-1\}$, by Lemma~\ref{lem:exponential} (with $\hV = V_i$ as well as $\hn = \tn^i$, $S = S_i$ as well as $s = (1 - \gamma) \tn^i$, and $t = \frac{\gamma}{k-2} \cdot \tn$),
any algorithm to find a set $X \subseteq V$ such that $X\cap V_i \subseteq S_i$ and $|X \cap V_i| \ge \frac{\gamma}{k-2} \cdot \tn$ calls the value oracle at least
\[\delta'\cdot\left(\frac{1}{1 - \gamma}\right)^{\gamma\tn/(k-2)} = \delta' \cdot 2^{\Theta(\tn)} = \delta' \cdot 2^{\Theta(n^{1/(k-1)})}\]
times with probability at least $1 - \delta'$, where $0<\delta'<1$.
By setting $\delta = (k-2) \delta'$, we see that any algorithm to find a set $X \subseteq V$ such that $X\cap V_i \subseteq S_i$ and $|X \cap V_i| \ge \frac{\gamma}{k-2} \cdot \tn$ for some $i\in \{2,3,\dots , k-1\}$ calls the value oracle at least $\delta' \cdot 2^{\Theta(n^{1/(k-1)})} = \delta \cdot 2^{\Theta(n^{1/(k-1)})}$ times with probability at least $1 - (k-2) \delta' = 1-\delta$ by the union bound.
The same number of oracle calls are required for obtaining  $X\subseteq V$ such that $f(X)>\tn^{k}$, because $\max_{X \subseteq V} f_i (X) = \tn^k$ for $i\in [k-1]$.
By combining this with $\max_{X\subseteq V}f(X) > (k-1-\epsilon) \tn^{k}$, we complete the proof of Theorem~\ref{thm:hard_bounded}. 
\end{proof}

This theorem shows that an exponential number of oracle calls are required with high probability, which implies the following corollary. 

\begin{corollary}\label{cor:hard_bounded}
Let $k \geq 3$ and $\epsilon>0$.
Then, for any randomized $(k - 1 - \epsilon)$-approximation algorithm for $k$-XOS maximization, the expected number of value oracle calls is exponential in the ground set size $n$ for the worst instance.
\end{corollary}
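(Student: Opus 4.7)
The plan is to derive Corollary~\ref{cor:hard_bounded} as a short direct consequence of Theorem~\ref{thm:hard_bounded} via Markov's inequality, turning the ``with high probability'' tail bound on query count into a lower bound on expected query count. Nothing new needs to be constructed; the corollary is just a repackaging of the theorem in expectation.

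Concretely, I would fix $k \geq 3$ and $\epsilon > 0$, and let $A$ be any randomized $(k-1-\epsilon)$-approximation algorithm for $k$-XOS maximization. Let $D$ be the hard distribution of $k$-XOS instances supplied by Theorem~\ref{thm:hard_bounded}, and let $Q$ denote the random variable counting the number of value oracle queries made by $A$ when the instance is drawn from $D$, where the randomness is taken jointly over the draw from $D$ and the internal coins of $A$. Applying Theorem~\ref{thm:hard_bounded} with $\delta = 1/2$ yields $\Pr\!\left[Q \geq \tfrac{1}{2}\cdot 2^{\Omega(n^c)}\right] \geq \tfrac{1}{2}$, and the standard tail-to-mean implication (Markov applied to $Q$) then gives $\mathbb{E}[Q] \geq \tfrac{1}{4}\cdot 2^{\Omega(n^c)}$, which is exponential in $n$ for fixed $k$ and $\epsilon$.

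To extract a worst-case instance from this average bound, I would decompose $\mathbb{E}[Q] = \mathbb{E}_{I \sim D}\!\left[\mathbb{E}[Q \mid I]\right]$, where $\mathbb{E}[Q \mid I]$ is the expected query count of $A$ on a specific instance $I$ over only $A$'s internal randomness. Since any average is dominated by the supremum, there must exist some instance $I^\ast$ in the support of $D$ with $\mathbb{E}[Q \mid I^\ast] \geq \mathbb{E}[Q] \geq \tfrac{1}{4}\cdot 2^{\Omega(n^c)}$, and this $I^\ast$ witnesses the corollary, i.e., constitutes a worst-case instance on which $A$ requires an exponential expected number of value oracle calls.

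The whole argument is routine once Theorem~\ref{thm:hard_bounded} is in hand, so essentially no obstacle arises. The only point worth double-checking is that the constant hidden in $\Omega(n^c)$ within Theorem~\ref{thm:hard_bounded} is uniform in $\delta$, so that plugging in $\delta = 1/2$ does not collapse the exponential growth rate; this is immediate from the construction in the theorem's proof, where $c = 1/(k-1)$ and the exponential factor depends only on $k$, $\epsilon$, and the parameter $\gamma$ chosen independently of $\delta$.
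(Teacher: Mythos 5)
Your proof is correct and matches the paper's (essentially implicit) argument: Theorem~\ref{thm:hard_bounded} is invoked with a fixed $\delta$, the tail bound on the query count is converted into a lower bound on the mean via the elementary inequality $\mathbb{E}[Q] \geq a \Pr[Q \geq a]$ for nonnegative $Q$, and an averaging argument over the hard distribution produces a single worst-case instance. Your closing check that the $2^{\Omega(n^c)}$ factor in the theorem is independent of $\delta$ is indeed the only point that needs verifying, and it holds as you observe.
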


It is worth mentioning that the hardness results for $k$-XOS maximization (Theorem~\ref{thm:hard_bounded} and Corollary~\ref{cor:hard_bounded}) hold only when $k$ is a fixed constant.
The approximability when $k = \omega(1)$ remains open.
\section{The Number of XOS Functions}\label{sec:number}
In this section, we show that the number of order-different XOS functions with bounded width is single exponential in the ground set size $n$, whereas there are doubly-exponentially many order-different XOS functions in general.
Recall that, for set functions $f,g\colon 2^V\to\mathbb{R}$ on the common ground set $V$, we say that $f$ and $g$ are \emph{order-equivalent} if $f(X)\le f(Y)$ if and only if $g(X)\le g(Y)$ for all $X,Y\subseteq V$, and that $f$ and $g$ are \emph{order-different} if they are not order-equivalent.

First, we observe that the number of order-different XOS functions is doubly exponential in $n$.
In particular, even if we are restricted to the rank functions of matroids, which are normalized and submodular, and hence XOS (see Appendix~\ref{app:class}), the number is so large. For the basics on matroids, we refer the readers to \cite{Oxley}.

\begin{theorem}[Knuth~\cite{Knuth1974}]\label{thm:Knuth}
The number of distinct matroids on $V = [n]$ is $2^{2^{\Theta(n)}}$.
\end{theorem}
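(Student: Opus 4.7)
The plan is to prove matching bounds. The upper bound $2^{2^{O(n)}}$ is immediate: any matroid on $V = [n]$ is uniquely determined by its family of independent sets, which is a subfamily of $2^V$, and there are only $2^{2^n}$ such subfamilies. (A tighter bound via Dedekind-style counting of antichains in $2^V$ is available but unnecessary.)

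For the lower bound, I would exhibit $2^{2^{\Omega(n)}}$ distinct \emph{sparse paving matroids} on $V$. Fix $r = \lfloor n/2 \rfloor$ and consider the Johnson graph $J(n,r)$ on vertex set $\binom{V}{r}$ in which $A, B$ are adjacent iff $|A \triangle B| = 2$. For each independent set $I \subseteq \binom{V}{r}$ of $J(n,r)$, I would define a candidate $M_I$ whose bases are $\mathcal{B}(I) \coloneqq \binom{V}{r} \setminus I$, and verify that $M_I$ is a matroid. Since distinct $I$ clearly produce distinct $\mathcal{B}(I)$ and hence distinct matroids, the total count is at least the number of independent sets of $J(n,r)$, which is at least $2^{\alpha(J(n,r))}$.

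The main technical point is the basis-exchange axiom for $\mathcal{B}(I)$: given $A, B \in \mathcal{B}(I)$ and $x \in A \setminus B$, produce $y \in B \setminus A$ with $A - x + y \in \mathcal{B}(I)$. If $|A \triangle B| = 2$, take $y$ to be the unique element of $B \setminus A$; then $A - x + y = B \in \mathcal{B}(I)$. If $|A \triangle B| \ge 4$, suppose toward contradiction that $A - x + y \in I$ for every $y \in B \setminus A$. Since $|B \setminus A| \ge 2$, any two choices $y_1 \neq y_2$ would give distinct elements of $I$ whose symmetric difference is $\{y_1, y_2\}$, contradicting the independence of $I$ in $J(n,r)$. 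This is the only nontrivial verification.

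For the count, $J(n, \lfloor n/2 \rfloor)$ is $r(n-r)$-regular with $\binom{n}{\lfloor n/2 \rfloor} = \Theta(2^n/\sqrt{n})$ vertices, so the greedy bound yields $\alpha(J(n,\lfloor n/2 \rfloor)) \ge \binom{n}{\lfloor n/2 \rfloor}/(r(n-r)+1) = 2^{\Omega(n)}$, and therefore at least $2^{2^{\Omega(n)}}$ matroids. Combined with the upper bound this gives $2^{2^{\Theta(n)}}$. The part I expect to require the most care is the basis-exchange verification; the rest is essentially counting.
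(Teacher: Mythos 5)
The paper does not prove this theorem; it is quoted from Knuth's 1974 paper, so there is no in-paper proof to compare against. Your proof is correct and self-contained, and in fact it reproduces the essence of Knuth's original lower-bound construction: Knuth also obtains $2^{2^{\Omega(n)}}$ matroids by choosing, as the ``forbidden'' middle-layer sets, an arbitrary subfamily of $\binom{V}{\lfloor n/2\rfloor}$ in which any two members differ in at least four elements (i.e., an independent set in $J(n,\lfloor n/2\rfloor)$), which is precisely the sparse-paving-matroid family you describe. Your basis-exchange verification and the greedy bound $\alpha(J(n,r))\ge \binom{n}{r}/(r(n-r)+1)=2^{\Omega(n)}$ are both sound. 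One small point worth making explicit: the basis axiom system requires $\mathcal{B}(I)$ to be nonempty, which holds because $J(n,\lfloor n/2\rfloor)$ has at least one edge for $n\ge 2$, so no independent set $I$ can equal all of $\binom{V}{r}$; with that remark added, the argument is complete.
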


\begin{corollary}\label{cor:general_number}
The number of order-different XOS functions on $V=[n]$ is $2^{2^{\Theta(n)}}$.
\end{corollary}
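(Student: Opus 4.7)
My plan is to establish matching bounds $2^{2^{O(n)}}$ and $2^{2^{\Omega(n)}}$. The upper bound is essentially free: any order-equivalence class of set functions $f \colon 2^{[n]} \to \mathbb{R}$ is determined by the total preorder that $f$ induces on $2^{[n]}$, and the number of total preorders on a set of size $m$ is at most $m^m$ (assign each element a rank from $1$ to $m$). Taking $m = 2^n$ gives $2^{n \cdot 2^n} = 2^{2^{O(n)}}$; since XOS functions form a subclass of all set functions, they inherit this bound.

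For the lower bound, the plan is to apply Theorem~\ref{thm:Knuth} to obtain $2^{2^{\Omega(n)}}$ distinct matroids on $V = [n]$ and convert them into pairwise order-different XOS functions via their rank functions. Two ingredients are required: (a) every matroid rank function $r$ is XOS; (b) distinct matroids yield order-different rank functions. For (a), I would use the representation $r(X) = \max_{B \in \mathcal{B}} |X \cap B|$, where $\mathcal{B}$ is the set of bases of the matroid: the map $X \mapsto |X \cap B|$ is additive in $X$, the bound $r(X) \geq |X \cap B|$ holds because $X \cap B$ is independent, and the reverse inequality follows by extending a maximum independent subset of $X$ to a basis.

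For (b), I would argue as follows. Order-equivalent functions $r_1$ and $r_2$ satisfy $r_1(A) = r_1(B)$ if and only if $r_2(A) = r_2(B)$ for all $A, B \subseteq V$, since order equivalence (in terms of $\leq$) preserves both strict inequalities and equalities. Applying this with $A = X$ and $B = X + v$ for each $v \notin X$, and noting that $r_i(X) \leq r_i(X + v)$ always, shows that the closure operator $\mathrm{cl}(X) = \{\, v : r(X + v) = r(X) \,\}$ agrees for $r_1$ and $r_2$. Since the closure operator uniquely determines a matroid (a classical fact; see, e.g., Oxley's textbook cited earlier), we conclude $r_1 = r_2$, so distinct matroids produce order-different rank functions.

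The main obstacle is step (b): without it, two distinct matroid rank functions could potentially collapse into a single order-equivalence class, undermining the count provided by Knuth's theorem. The closure-operator argument above handles this cleanly, reducing the difficulty to invoking the standard matroid-axiomatics fact that a matroid is determined by its closure operator; the remaining counting and the XOS representation in step (a) are routine.
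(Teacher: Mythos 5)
Your proposal is correct, and it reaches the same count by a route that differs from the paper's in two places worth noting. For the upper bound, you count total preorders on $2^{[n]}$ directly as $m^m$ with $m = 2^n$, whereas the paper bounds by the cruder $2^{m^2}$ count of binary relations; both give $2^{2^{O(n)}}$, so the difference is cosmetic. For the lower bound, both you and the paper inject Knuth's $2^{2^{\Theta(n)}}$ matroids into order-equivalence classes via their rank functions, but the ways of certifying that distinct matroids give order-different ranks diverge. The paper takes a minimal set $Y$ on which $f_1 \neq f_2$ and peels off an element to exhibit an explicit witness pair $(X, X+e)$ with $f_1(X)=f_2(X)=f_2(X+e)=f_1(X+e)-1$, while you observe that order-equivalence preserves equalities and hence the closure operator $\mathrm{cl}(X)=\{v : r(X+v)=r(X)\}$, then invoke the standard fact that a matroid is determined by its closure operator. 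Both are valid; the paper's is more elementary and self-contained (no appeal to matroid cryptomorphisms), while yours is arguably more conceptual. You also give an explicit XOS representation $r(X)=\max_{B\in\mathcal{B}}|X\cap B|$ of matroid rank functions, which is more concrete than the paper's appeal to the chain ${\tt SubM}^*\subseteq {\tt XOS}^*$ in Appendix~\ref{app:class}; your verification of that representation (lower bound by independence of $X\cap B$, upper bound by extending a maximum independent subset of $X$ to a basis) is correct, including in the presence of loops and coloops.
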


\begin{proof}
As the number of binary relations on $m$ elements is $2^{m^2}$ and the order-equivalence classes of set functions on $V$ correspond one-to-one to the total preorders on $2^V$, the number of order-different XOS functions on $V=[n]$ is at most $2^{(2^n)^2}=2^{2^{O(n)}}$.

The matroids on $V$ are uniquely defined by their rank functions $f \colon 2^V \to \mathbb{R}$, which are XOS (as normalized, monotone, and submodular).
Moreover, if two matroid rank functions $f_1$ and $f_2$ on $V$ are distinct, then there exist $X \subsetneq V$ and $e \in V \setminus X$ such that $f_1(X) = f_2(X) = f_2(X + e) = f_1(X + e) - 1$ (or the symmetric condition obtained by exchanging the indices $1$ and $2$), which implies that $f_1$ and $f_2$ are order-different.
\end{proof}

Next, we show that, for any fixed $k$, the number of order-different $k$-XOS functions is single exponential in $n$.

\begin{theorem}\label{thm:number_upper}
The number of order-different $k$-XOS functions on $V=[n]$ is $2^{O(k^2n^2)}$.
\end{theorem}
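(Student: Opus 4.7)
The plan is to parametrize width-$k$ representations by points in $\mathbb{R}^{kn}$, and count order-equivalence classes by bounding the number of sign patterns attained by an explicit family of linear forms on this parameter space.

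A representation of a $k$-XOS function as $f = \max_{i \in [k]} f_i$ is encoded by a vector $\mathbf{f} = (f_i(v))_{i \in [k],\, v \in V} \in \mathbb{R}^{kn}$. Since every $k$-XOS function admits such a representation (after padding the width up to $k$ by repeating an $f_i$ if necessary), it suffices to bound the number of order-equivalence classes among the functions represented by vectors $\mathbf{f} \in \mathbb{R}^{kn}$. For any $\mathbf{f}$ and any $X, Y \subseteq V$,
\[
f(X) \le f(Y) \iff \max_{i \in [k]} f_i(X) \le \max_{j \in [k]} f_j(Y) \iff \forall\, i \in [k]\; \exists\, j \in [k] \colon f_i(X) - f_j(Y) \le 0,
\]
which is a fixed Boolean combination, independent of $\mathbf{f}$, of the signs of the $N \coloneqq k^2 \cdot 4^n$ linear forms
\[
L_{i,j,X,Y}(\mathbf{f}) \coloneqq \sum_{v \in X} f_i(v) - \sum_{v \in Y} f_j(v), \quad (i,j) \in [k]^2,\;(X,Y) \in (2^V)^2.
\]
Hence the order-equivalence class of the function represented by $\mathbf{f}$ depends on $\mathbf{f}$ only through the sign vector in $\{-,0,+\}^N$ obtained by evaluating these $N$ forms at $\mathbf{f}$.

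It therefore suffices to bound the number of distinct sign vectors attained by this family of $N$ linear forms over $\mathbb{R}^{kn}$. By the classical bound on the number of faces of a hyperplane arrangement in $\mathbb{R}^d$ (equivalently, Warren's theorem applied to degree-$1$ polynomials), this number is at most $(cN)^d$ for some absolute constant $c$, where $d = kn$. Consequently, the number of order-different $k$-XOS functions on $V$ is at most
\[
(c \cdot k^2 \cdot 4^n)^{kn} \;=\; 2^{O(kn \log k + kn^2)} \;=\; 2^{O(k^2 n^2)},
\]
where the last step uses $\log k \le k \le kn$ for $k, n \ge 1$.

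The only conceptual step is the reduction from order-equivalence of $k$-XOS functions to sign-pattern equivalence of the explicit forms $L_{i,j,X,Y}$; once that is in place, the counting is an immediate application of the classical hyperplane-arrangement bound, so I do not anticipate any substantial obstacle.
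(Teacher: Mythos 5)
Your proof is correct, and it takes a genuinely different route from the paper's. The paper fixes a maximizer index function $\iota$ and sets up a polyhedron $P[f]\subseteq\mathbb{R}^{kn}$ whose constraints encode the total preorder of $f$ together with the maximizer choices; it then argues that $P[f]$ is nonempty with a full-rank constraint matrix, so it has a basic solution $\hat w=A^{-1}b$ with $A\in\{-1,0,1\}^{kn\times kn}$ and $b\in\{0,1\}^{kn}$, and it counts such $(A,b)$ pairs to get $3^{(kn)^2}\cdot 2^{kn}=2^{O(k^2n^2)}$. You instead observe that the entire total preorder of a $k$-XOS function is a fixed Boolean function of the sign pattern of the $N=k^2\cdot 4^n$ linear forms $L_{i,j,X,Y}$ on the parameter space $\mathbb{R}^{kn}$, and then invoke the classical $O(N^d)$ bound on the number of faces of a hyperplane arrangement (equivalently, a degree-one instance of the Milnor--Thom/Warren-type sign-pattern bounds) with $d=kn$. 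Both arguments are valid; the paper's has the virtue of being self-contained and producing an explicit small representative for each equivalence class, while yours is shorter once the arrangement bound is granted. In fact your estimate is slightly sharper than what you claim: since one may assume $k\le 2^n$, you have $\log k=O(n)$ and hence $(ck^2 4^n)^{kn}=2^{O(kn^2)}$, which improves the paper's $2^{O(k^2n^2)}$ (the paper only claims tightness with respect to $n$, so this is consistent). You could state the stronger $2^{O(kn^2)}$ bound rather than loosening it with $\log k\le k$.
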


\begin{proof}
Let $f$ be a $k$-XOS function with additive functions $f_1,\dots,f_k$ such that $f(X)=\max_{i\in[k]}f_i(X)$ $(\forall X\subseteq V)$.
Fix a function $\iota\colon 2^{V}\to [k]$ that represents a maximizer index, i.e., $\iota(X)\in I(X)$ for each $X\subseteq V$.
Let us consider the following polyhedron $P[f]$:
\begin{align*}
  P[f]\coloneqq\left\{\,
  w ~\middle|~
  \begin{array}{ll}
  \sum_{v\in X}w_{\iota(X),v}-\sum_{u\in Y}w_{\iota(Y),u}\ge 1&
  (\forall X,Y\in 2^{V}\text{ with }f(X)>f(Y)),\\
  \sum_{v\in X}w_{\iota(X),v}-\sum_{u\in Y}w_{\iota(Y),u}= 0&
  (\forall X,Y\in 2^{V}\text{ with }f(X)=f(Y)),\\
  \sum_{v\in X}w_{\iota(X),v}-\sum_{v\in X}w_{i,v}\ge 0&
  (\forall X\in 2^{V},\ \forall i\in[k]\setminus\{\iota(X)\})
  \end{array}
  \right\}.
\end{align*}
Here, we have $kn$ variables and $O(2^{2n})$ linear constraints.
For any feasible weight $w\in P[f]$, the function $g$ defined by $g(X)=\max_{i\in[k]}\sum_{v\in X}w_{i,v}$ for all $X\subseteq V$ is order-equivalent to $f$.

The constraint matrix of the polyhedron $P[f]$ is full-rank, because we have one of $w_{\iota(v),v} \ge 1$, $-w_{\iota(v),v} \ge 1$, and $w_{\iota(v),v}=0$ for any $v\in V$ by the first and second inequalities, and $w_{\iota(v),v} - w_{i,v} \ge 0$ for any $v \in V$ and any $i\in[k]\setminus\{\iota(v)\}$ by the third inequality. 
Also, a vector $w$ defined by $w_{i,v} \coloneqq \beta\cdot f_i(v)$ $(i \in [k],~v \in V)$ with a sufficiently large $\beta$ is in $P[f]$, and hence $P[f]$ is feasible (nonempty).
Thus, $P[f]$ has a basic solution (vertex) (see, e.g., \cite[\S~8.5]{schrijver1998}).

Let $\hat{w}$ be a basic solution of $P[f]$.
Then, by considering the corresponding inequalities, we have $A\hat{w}=b$ (i.e., $\hat{w}=A^{-1}b$) for a nonsingular matrix $A\in\{-1,0,1\}^{kn\times kn}$ and a vector $b\in\{0,1\}^{kn}$.

Thus, for any $k$-XOS function $f$, there exists an order-equivalent $k$-XOS function $g$ that is defined by a weight $\hat{w}\coloneqq A^{-1}b$ with $A\in\{-1,0,1\}^{kn\times kn}$ and $b\in\{0,1\}^{kn}$.
As the number of possible such weights is at most $3^{(kn)^2}\cdot 2^{kn}=2^{O(k^2n^2)}$,
the proof is complete.
\end{proof}

Comparing Corollary~\ref{cor:general_number} and Theorem~\ref{thm:number_upper}, we get that almost all XOS functions on the ground set of size $n$ have width $\Omega(2^n / n)$.
It may be of interest to explore a better lower bound on the threshold $t$ such that most XOS functions have width at least $t$.

It is worth mentioning that the upper bound on the number of order-different $k$-XOS functions given in Theorem~\ref{thm:number_upper} is tight with respect to $n$.

\begin{theorem}\label{thm:number_lower}
The number of order-different additive (1-XOS) functions on $V=[n]$ is $2^{\Omega(n^2)}$.
\end{theorem}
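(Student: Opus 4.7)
The plan is to reduce the lower bound to the classical $2^{\Omega(n^2)}$ bound on the number of Boolean linear threshold functions, a result due independently to Muroga and to Yajima--Ibaraki in 1965. Recall that $\phi\colon\{0,1\}^n\to\{0,1\}$ is a \emph{threshold function} if there exist $w\in\mathbb{R}^n$ and $\theta\in\mathbb{R}$ with $\phi(x)=1$ iff $w\cdot x>\theta$, and the number of distinct such functions is $2^{\Omega(n^2)}$. For each pair $(w,\theta)\in\mathbb{R}^n\times\mathbb{R}$, define the upper set
\[
    U_\theta(w)\coloneqq\Bigl\{\,X\subseteq V\,\Bigm|\,\textstyle\sum_{v\in X}w_v>\theta\,\Bigr\}\subseteq 2^V.
\]
Identifying $2^V$ with $\{0,1\}^n$ via indicator vectors, each $U_\theta(w)$ is precisely the on-set of the threshold function $x\mapsto[w\cdot x>\theta]$, and conversely every threshold function arises in this way.

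Two simple observations form the heart of the argument. First, if the additive functions defined by weights $w$ and $w'$ are order-equivalent, then the family $\mathcal{U}(w)\coloneqq\{\,U_\theta(w)\mid\theta\in\mathbb{R}\,\}$ depends only on the order-equivalence class of $w$. Indeed, for any $\theta$, taking $\theta'\coloneqq\max\{\,\sum_{v\in Y}w'_v\mid Y\notin U_\theta(w)\,\}$ (or $-\infty$ if $U_\theta(w)=2^V$) yields $U_{\theta'}(w')=U_\theta(w)$, since order-equivalence preserves the relative ordering of all subset sums. Second, for any fixed $w$, the sets $U_\theta(w)$ form a nested chain inside $2^{2^V}$ that strictly decreases only at the (at most $2^n$) distinct values attained by $\sum_{v\in X}w_v$, so $|\mathcal{U}(w)|\le 2^n+1$.

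Let $N$ be the number of order-equivalence classes of additive functions on $V$. Since every threshold function on $\{0,1\}^n$ appears as some $U_\theta(w)$, summing $|\mathcal{U}(w)|$ over the order-equivalence classes bounds the total number of threshold functions from above, giving
\[
    2^{\Omega(n^2)} \;\le\; N\cdot(2^n+1),
\]
whence $N\ge 2^{\Omega(n^2)}/(2^n+1)=2^{\Omega(n^2)}$, as desired. The main (external) ingredient is the classical threshold-function lower bound, and this is where the argument ultimately has to do work; if a self-contained version is preferred, one can instead pick $m=2^{\Omega(n)}$ vectors $y\in\{-1,0,1\}^n$ whose lifts $(y,-1)\in\mathbb{R}^{n+1}$ are in sufficiently general position (which a short probabilistic argument supplies) and count sign patterns via Zaslavsky's region-counting formula $\sum_{i=0}^n\binom{m-1}{i}=2^{\Omega(n^2)}$, directly matching the upper bound of Theorem~\ref{thm:number_upper} at $k=1$.
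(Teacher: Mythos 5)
Your main argument is correct, but it takes a genuinely different route from the paper.

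The paper's proof is a short explicit construction: set the weights of the first $\lfloor n/2\rfloor$ items to the powers $1,2,4,\dots,2^{\lfloor n/2\rfloor-1}$ (a ``binary ruler''), and let each of the remaining $\lceil n/2\rceil$ items have an arbitrary weight in $\{0,1,\dots,2^{\lfloor n/2\rfloor}-1\}$. This gives $\bigl(2^{\lfloor n/2\rfloor}\bigr)^{\lceil n/2\rceil}=2^{\Omega(n^2)}$ functions, and any two distinct ones are order-different: if two such $f,g$ agree on the ruler half but $f(v_0)<g(v_0)$ for some $v_0$ in the free half, then writing $f(v_0)$ in binary as $\sum_{v\in Y}2^{v-1}$ for a unique $Y$ inside the ruler half shows $f(\{v_0\})\le f(Y)$ while $g(\{v_0\})>g(Y)$. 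So the paper's argument is entirely self-contained and elementary.

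Your argument instead reduces the bound to the classical $2^{\Omega(n^2)}$ lower bound on the number of Boolean linear threshold functions. The reduction is clean and correct: Step~1 (order-equivalent $w,w'$ have $\mathcal{U}(w)=\mathcal{U}(w')$) follows exactly as you describe, since order-equivalence preserves strict inequalities between subset sums and so the chosen $\theta'$ works; Step~2 ($|\mathcal{U}(w)|\le 2^n+1$) is immediate from the fact that $U_\theta(w)$ changes only when $\theta$ crosses one of the at most $2^n$ distinct subset-sum values; and dividing $2^{\Omega(n^2)}$ by $2^n+1$ still yields $2^{\Omega(n^2)}$. What the reduction buys is a conceptual link to a well-studied classical object, at the cost of invoking a nontrivial external theorem where the paper needs only two lines of elementary counting. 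One caveat: your sketch of a ``self-contained'' replacement via Zaslavsky's formula is not as easy as you suggest -- it is not clear that exponentially many vectors of $\{-1,0,1\}^n$ can be placed in sufficiently general position (a union bound over all $n$-subsets does not close, because singularity probabilities of random sign matrices are only $c^n$ with $c<1$, which is swamped by $\binom{m}{n}$ already at constant $m$). Your primary argument via the threshold-function bound is sound, but the fallback sketch would need more care before it could be called self-contained.
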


\begin{proof}
Consider additive functions $f$ such that  $f(v)=2^{v-1}$ if $v=1, 2, \dots,\lfloor n/2\rfloor$ and $f(v)\in\{0, 1, 2, \dots,2^{\lfloor n/2\rfloor} - 1\}$ if $v=\lfloor n/2\rfloor+1,\dots,n$.
There are $(2^{\lfloor n/2\rfloor})^{\lceil n/2\rceil}=2^{\Omega(n^2)}$ possibilities and all the functions are order-different (consider binary expansion of $f(v)$ $(v > \lfloor n/2 \rfloor)$ using $f(v)$ $(v \leq \lfloor n/2 \rfloor)$).
Thus the theorem holds.
\end{proof}

\section*{Acknowledgments}
We thank Tomomi Matsui for useful discussion on the number of order-different set functions.
We are grateful to the anonymous reviewers for giving insightful comments and suggestions.
YF is a Taub Fellow --- supported by the Taub Foundations.
His research was funded by ISF grant 1337/16.
YK, YK, and YY were supported by JST ACT-I Grant Numbers JPMJPR17U7 and JPMJPR17UB, and by JSPS KAKENHI Grant Numbers JP15H05711, JP16H03118, JP16K16005, JP16K16010, and JP18H05291.
Most of this work was done when YY was with Osaka University.

\bibliographystyle{plain}
\bibliography{xos}

\appendix
\section{Classes of Set Functions}\label{app:class}
The following classes are well-studied in combinatorial optimization
and recently also in algorithmic game theory as valuations of agents.
\begin{definition}\label{def:set_function}
A set function $f\colon 2^V\to\mathbb{R}$ is 
\begin{itemize}
\item \emph{normalized}, if $f(\emptyset)=0$.
\item \emph{monotone}, if $f(X)\le f(Y)$ for every $X\subseteq Y\subseteq V$.
\item \emph{additive}, if $f(X)=\sum_{v\in X}f(v)$ for every $X \subseteq V$.
\item \emph{gross-substitute}, if 
for every $p,q\in\mathbb{R}^V$ with $p\le q$ and every $X\in\argmax\{\, f(S)-\sum_{v\in S}p_v\mid S\subseteq V\,\}$,
there exists $Y\in\argmax\{\, f(S)-\sum_{v\in S}q_v\mid S\subseteq V \,\}$ such that $\{\, v\in X\mid p_v=q_v \,\}\subseteq Y$.
\item \emph{submodular}, if $f(X)+f(Y)\ge f(X\cup Y)+f(X\cap Y)$ for every $X,Y\subseteq V$.
\item \emph{fractionally subadditive}: $f(T)\le\sum_{i} \alpha_if(S_i)$ for every $T, S_i \subseteq V$ whenever $\alpha_i\ge 0$ and $\sum_{i \colon v\in S_i}\alpha_i\ge 1$ ($\forall v\in T$).
\item \emph{subadditive} if $f(X\cup Y)\le f(X)+f(Y)$ for every $X, Y \subseteq V$.
\end{itemize}
\end{definition}

Let us denote by \Add, \GS, \Submod, \FracSubadd, \Subadd, and \XOS the sets of (normalized) additive functions, of normalized gross-substitute functions,
of normalized submodular functions, of normalized fractionally subadditive functions,
of normalized subadditive functions, and of (normalized) XOS functions, respectively.
Also, we add $*$ to each of them to assume the monotonicity in addition to each property.
We then have the following relations~\cite{Feige2007,GS1999,LLN2006}:
\begin{align*}
&\Add \subseteq \GS \subseteq \Submod \subseteq \XOS,\\
&\Add^* \subseteq \GS^* \subseteq \Submod^* \subseteq \XOS^* = \FracSubadd^*~(= \FracSubadd) \subseteq \Subadd^*.
\end{align*}

\end{document}